\documentclass[letterpaper, 10 pt, conference]{ieeeconf}  

\IEEEoverridecommandlockouts                              
\usepackage{dsfont}
\usepackage{graphicx}
\usepackage{epstopdf}
\usepackage{epsfig}
\usepackage{tikz}
\usepackage{rotating}
\usepackage{mathtools}
\usepackage{subfig}
\usepackage{enumerate,pgfplots}
\usepackage{hyperref}
\usepackage{amsmath,amssymb,amsfonts}
\usepackage{cleveref}
\newtheorem{theorem}{Theorem}

\newtheorem{proposition}{Proposition}
\newtheorem{lemma}{Lemma}
\newtheorem{corollary}{Corollary}
\newtheorem{example}{Example}
\newtheorem{remark}{Remark}

\newcommand{\ba}{\begin{array}}
\newcommand{\ea}{\end{array}}

\newcommand{\be}{\begin{equation}}
\newcommand{\ee}{\end{equation}}

\newcommand{\ds}{\displaystyle}

\newcommand{\eps}{\varepsilon}

\newcommand{\mc}{\mathcal}

\newcommand{\ov}{\overline}
\newcommand{\ul}{\underline}

\def\1{\boldsymbol{1}}
\def\0{\boldsymbol{0}}
\def\qed{\hfill$\blacksquare$}

\newcommand{\R}{\mathbb{R}}

\newcommand{\C}{\mathbb{C}}

\newcommand{\inv}{{-1}}
\newcommand{\ga}{[\gamma]}
\newcommand{\de}{[\delta]}

\newcommand{\X}{\mathcal{X}}

\def\R{\mathbb{R}}
\def\C{\mathbb{C}}

\tikzstyle{v_c}=[circle, draw,inner sep=2pt, minimum width=12pt, color=blue]
\tikzstyle{v_a}=[circle, draw,inner sep=2pt, minimum width=12pt, color=red]
\tikzstyle{edge} = [draw,thick,-,font=\small ]
\tikzstyle{label} = [draw,fill=black,font=\normalsize]

\begin{document}
\title{\LARGE \bf On the dynamic behavior of the network SIRS epidemic model}
\author{Giulia Gatti, Giacomo Como 
	\thanks{The authors are with the  Department of Mathematical Sciences ``G.L.~Lagrange,'' Politecnico di Torino, 10129 Torino, Italy. Email: giulia.gatti@studenti.polito.it, giacomo.como@polito.it }
	}
	
	\maketitle
	
	\begin{abstract}
We study the Suscectible-Infected-Recovered-Susceptible (SIRS) epidemic model on deterministic networks. For connected but otherwise general interaction patterns and heterogeneous recovery and loss-of-immunity rates, we identify a fundamental parameter $R_0$ (the basic reproduction number), which fully characterizes the qualitative dynamic behavior of the system. This parameter is the dominant eigenvalue of a rescaled version of the interaction matrix, whose rows are normalized by the corresponding recovery rates. We prove that a transcritical bifurcation occurs as $R_0$ crosses the threshold value $1$. Specifically, we show that, if $R_0\le1$, then the disease-free equilibrium is globally asymptotically stable, whereas, if $R_0>1$, then the disease-free equilibrium is unstable and there exists a unique endemic equilibrium, which is asymptotically stable. As a byproduct of our analysis, we also identify key monotonicity properties of the dependence of the endemic equilibrium on the model parameters (the interaction matrix as well as the recovery rates and the loss-of-immunity rates) and obtain a distributed iterative algorithm for its computation, with provable convergence guarantees. 
Our results extend existing ones available in the literature for network SIRS epidemic models with rank-one interaction matrices and homogeneous  recovery rates (including the single homogeneous population SIRS epidemic model).  
\end{abstract}


\section{Introduction}\label{sec:introduction}

The need to analyze and control the spread of infectious diseases has long been a driving force for the development of mathematical models for epidemics.  A fundamental class of such models is given by the compartmental models, which describe the evolution in aggregate form by dividing the population into distinct compartments according to disease status, the most common of which are: \textit{susceptible}, \textit{infected},  and \textit{recovered}.
A prototypical example is the Susceptible-Infected-Recovered  (SIR) model  \cite{Kermack.McKendrick:1927,KermackMcKendrick1932
}, which is extensively studied and applied in epidemiology \cite{anderson-may1991,Diekmann2000MathematicalEO}, and is suitable for infections that provide long-lasting immunity, such as measles or smallpox.
For diseases that do not confer lasting immunity and therefore allow for repeated infections, such as gonorrhea and other sexually transmitted infections, the Susceptible–Infected–Susceptible (SIS) model is more appropriate. This modeling framework has been widely studied in the epidemiological literature, particularly in the context of sexually transmitted diseases \cite{cooke1973some,hethcote1973asymptotic,GarnettAnderson1996}. In the SIS model, individuals who recover from infection immediately return to the susceptible class, as no immunity is acquired. On the other hand, several infectious diseases confer only temporary immunity. In such cases, individuals may recover, lose immunity after a finite period, and become susceptible again, as observed for diseases such as influenza, whooping cough, and certain coronaviruses. To capture this epidemiological feature, the Susceptible–Infected–Recovered–Susceptible (SIRS) model introduces a feedback transition from the recovered compartment to the susceptible compartment, allowing immunity to wane over time. The SIRS model has twice the dimension of the SIS model and the same as the SIR model, where however the fractions of susceptible and recovered individuals are monotone  in time, implying convergence to an equilibrium point from every initial state.  This significantly increases the analytical complexity of the SIRS model.
Early studies include~\cite{liu1986influence} analyzing the influence of nonlinear incidence functions on stability and persistence, while further nonlinear effects, including bifurcation phenomena and complex dynamics, were explored in~\cite{Jin2007}.  A key aspect concerns the global stability of equilibria. In this context, 
\cite{KOROBEINIKOV2002955} discussed suitable Lyapunov functions for single population SIRS models and conditions for their asymptotic global stability. 

Classical compartmental epidemic models are based on the assumption of homogeneous mixing, which often fails to represent the heterogeneous contact patterns observed in real populations. Graph theory provides a natural mathematical framework for describing such structured interactions. 
Network epidemic models have been considered both at the microscopic level ---where nodes represent individuals--- and at the macroscopic model ---where nodes represent subpopulations or homogeneous groups of individuals. A fundamental contribution within the latter class is the study of the network $SIS$ model  in \cite{lajmanovich1976deterministic}, where, exploiting the fact that this is a monotone dynamical system \cite{HirschMonotoneSystems06}, the authors identify the dominant eigenvalue of the interaction matrix as the key parameter determining, along with the recovery rate, the threshold for the existence and stability of an endemic equilibrium. 
Graph theory has also been applied to the SIR epidemic model to account for heterogeneous contact structures~\cite{pastor2015epidemic,pare2020modeling,Alutto.ea:2025}. More recent contributions, including the reviews \cite{Nowzari.ea:2016,Mei.ea:2017,Brauer2017,ZinoCao2021}, provide a comprehensive overview of the evolution from classical scalar compartmental models to network-based approaches. 


More recent contributions have extended the SIRS framework to structured and network-based populations. Epidemic spreading in heterogeneous complex networks was analyzed in~\cite{Li2014} and \cite{chen2014global}, while layered and multiplex network formulations were proposed in~\cite{zhang2021layered} to model interacting spreading processes. Threshold phenomena and epidemic persistence in network-based SIRS models were investigated in~\cite{saif2019epidemic}, and global stability properties under nonmonotone incidence rates were established in~\cite{lijun-nonmonotone-2018}. Importantly, in these works, the type of interactions considered can be effectively recast as a rank-one matrix, i.e., a matrix that can be written as the outer product of two vectors. This corresponds to assuming that the interaction between two nodes factorizes into node-specific properties (e.g., susceptibility and infectivity), thus neglecting pairwise-specific effects, such as homophily. While this representation simplifies the analysis, it imposes a strong structural limitation, as it only captures a single mode of interaction and cannot account for more complex or heterogeneous connectivity patterns.

In this paper, we analyze the behavior of a network SIRS epidemic model over a finite connected graphs. The interaction matrix $W$ is assumed to be nonnegative and irreducible, with no further structural assumptions, allowing for non-separable interactions and providing a more flexible and expressive modeling framework. The reproduction number $R_0$ is defined as the dominant eigenvalue of a rescaled version of the interaction matrix $W$, whose rows are normalized by the corresponding recovery rates $\gamma_i$. We first study existence and uniqueness of equilibrium points. We reduce the problem of finding equilibria of the system to that of finding a fixed point of a monotone function in a space of half the dimension. In the supercritical regime $R_0>1$, we prove the existence and uniqueness of the endemic equilibrium. As a byproduct of our proof, we obtain a distributed iterative algorithm for computing the endemic equilibrium, which represents a significant advantage both from a theoretical and a computational perspective. Furthermore, we present a special case in which an explicit expression for the endemic equilibrium can be derived. Then, we study the asympotic stability of the equilibrium points, when they exist. If $R_0\le1$, the disease-free equilibrium is globally asymptotically stable; if $R_0>1$, it is unstable. The main contribution of this paper is to prove asymptotic stability of the endemic equilibrium above the epidemic threshold. What makes this challenging is that neither the network SIRS epidemic model is a monotone system (like the network SIS model \cite{lajmanovich1976deterministic}), nor are the fractions of susceptible and recovered individuals monotone in time (as in the network SIR model \cite{Mei.ea:2017,Alutto.ea:2025}), nor a global Lyapunov function is known (as is the case for the network SIRS model with rank-$1$ interaction matrix and homogeneous recovery rate, c.f., \cite{chen2014global} and Appendix \ref{sec:rank-one}). We develop an ad hoc argument based on Schur complement and Gershgorin theorem, which allows us to establish the local asymptotic stability of the endemic equilibrium. Finally, numerical simulations are provided to illustrate the theoretical findings and to further investigate the dynamic behavior of the system. In addition to confirming the analytical results, the simulations suggest new conjectures concerning the region of asymptotic stability of the endemic equilibrium. 

The paper is organized as follows. In \Cref{sec:model}, we introduce the network SIRS epidemic model and establish its well-posedness. In \Cref{sec:equilibria}, we discuss the existence and uniqueness of equilibrium points. Our main stability result is presented in \Cref{sec:stability}, where we analyze the equilibrium points both below and above the epidemic threshold. Finally, in \Cref{sec:simulations}, we report numerical results that support the theoretical analysis and motivate additional conjectures.

\section{Network SIRS epidemic model}\label{sec:model}
In this section, we introduce the network SIRS epidemic model and present some preliminary results. 

The interaction network is modeled as a directed, weighted graph $\mathcal{G} = (\mathcal{V}, \mathcal{E}, W)$, where $\mathcal V$ is a nonempty finite set of nodes representing populations of individuals, $\mathcal{E}\subseteq \mathcal V \times \mathcal V$ is the set of directed links, and $W$ in $\R_+^{\mc V\times\mc V}$ is the interaction matrix whose entries satisfy $W_{ij}>0$ if and only if $(i,j)\in\mc E$. 
For every population $i$ in $\mc V$, the time-dependent variables $x_i(t)$, $y_i(t)$, and $z_i(t)$ denote the fractions of susceptible, infected, and recovered individuals, respectively. Infection spreads through contacts among populations, whose frequency is described by the interaction matrix $W$. Our analysis is restricted to connected graphs, in which there exists a path from any node to every other node; in matrix terms, this corresponds to the interaction matrix $W$ being irreducible. Connectivity ensures that infection can spread from any population to all others, either directly or indirectly through the network of interactions.
Recovery and loss of immunity for each population $i$ in $\mc V$ occur at rates $\gamma_i>0$ and $\delta_i>0$, respectively, which are assumed to be constant throughout the paper.

The network SIRS epidemic model is then described by the following system of ordinary differential equations: 
\be\label{sirs_network_sistema}	
	\begin{cases}
		\dot{x}_i(t)= - \sum_{j} W_{ij}\, y_j(t)\, x_i(t) + \delta_i z_i(t)\\[6pt]
		\dot{y}_i(t) = \sum_{j} W_{ij}\, y_j(t)\, x_i(t) - \gamma_i y_i(t)\\[6pt]
		\dot{z}_i(t)= \gamma_i y_i(t) - \delta_i z_i(t)
	\end{cases}
\ee
for every $i$ in $\mathcal{V}$. 
Equation \eqref{sirs_network_sistema} defines a dynamical system with state space 
	\be\label{space_state_def}
		\mathcal{X}=\{(x,y,z)\in \mathbb{R}_+^{\mathcal V}\times\mathbb{R}_+^{\mathcal V}\times \mathbb{R}_+^{\mathcal V}: x+y+z=\mathbf{1}\}
	\ee
as formalized in the following standard result. 
\begin{lemma}\label{lemma:well_pos}
	Consider the network SIRS epidemic model  \eqref{sirs_network_sistema} with interaction matrix $W$, recovery rate profile $\gamma$, and loss-of-immunity rate profile $\delta$. For every initial condition $(x(0),y(0),z(0))$ in $\mathcal{X}$, there exists a unique solution $(x(t), y(t), z(t))$  in $\mathcal{X}$ for all $t \ge 0$.
\end{lemma}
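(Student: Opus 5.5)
The plan is the standard one: first obtain local existence and uniqueness from the regularity of the vector field, then show that $\mathcal{X}$ is forward invariant, and finally use compactness of $\mathcal{X}$ to extend the solution to all $t\ge0$. The right-hand side of \eqref{sirs_network_sistema} is polynomial in $(x,y,z)$, so it is locally Lipschitz on all of $\R^{3\mc V}$. The Picard--Lindel\"of theorem therefore gives a unique maximal solution on some interval $[0,T_{\max})$ for every initial condition.

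The first step in invariance is conservation of mass. Summing the three equations for each $i$ gives $\dot x_i+\dot y_i+\dot z_i=0$, so $x_i+y_i+z_i\equiv1$ along solutions, and the affine constraint in \eqref{space_state_def} is preserved.

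The second step is nonnegativity, which is the only point that needs any care. I would use the variation-of-constants form of each equation.
\begin{itemize}
\item For $y_i$, write $\dot y_i=-\gamma_i y_i + x_i\sum_j W_{ij}y_j$. The difficulty is that $x_i$ is not yet known to be nonnegative.
\item For $z_i$, the equation $\dot z_i=-\delta_i z_i+\gamma_i y_i$ gives $z_i(t)=e^{-\delta_i t}z_i(0)+\int_0^t e^{-\delta_i(t-s)}\gamma_i y_i(s)\,ds\ge0$ whenever $y_i\ge0$ on $[0,t]$.
\item For $x_i$, the equation $\dot x_i=-(\sum_j W_{ij}y_j)x_i+\delta_i z_i$ gives, by the integrating-factor formula,
$x_i(t)=e^{-\int_0^t\lambda_i}x_i(0)+\int_0^t e^{-\int_s^t\lambda_i}\delta_i z_i(s)\,ds\ge0$ whenever $z_i\ge0$, where $\lambda_i=\sum_jW_{ij}y_j$.
\end{itemize}
The circularity among these three is the main obstacle. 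I would break it with a standard perturbation argument. Either invoke the tangency (quasi-positivity) criterion for invariance of $\R_+^{3\mc V}$: when $x_i=0$ we have $\dot x_i=\delta_iz_i\ge0$; when $y_i=0$ we have $\dot y_i=x_i\sum_jW_{ij}y_j\ge0$; when $z_i=0$ we have $\dot z_i=\gamma_iy_i\ge0$, all provided the other coordinates are nonnegative. Or argue directly by considering the first time $\tau$ at which some coordinate would become negative.

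To make the direct argument rigorous, I would first treat initial conditions in the interior of the orthant. Up to $\tau$, the three formulas above show that all coordinates are in fact strictly positive on $[0,\tau]$. For instance, $y_i(t)\ge e^{-\gamma_i t}y_i(0)>0$, since the forcing term is nonnegative up to $\tau$. This contradicts the definition of $\tau$. Initial conditions on the boundary then follow from continuous dependence on initial data, because $\mathcal{X}$ is closed. Alternatively, I would replace the vector field by one with an added $\varepsilon$ positive inflow on each coordinate and let $\varepsilon\to0$.

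With invariance established, the solution stays in the compact set $\mathcal{X}$, since every coordinate lies in $[0,1]$. A maximal solution of a locally Lipschitz ODE that remains in a compact set cannot blow up, so $T_{\max}=\infty$. This yields global existence and uniqueness in $\mathcal{X}$.
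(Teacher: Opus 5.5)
Your proposal is correct and follows essentially the same route as the paper. Both arguments use Cauchy--Lipschitz for local well-posedness, conservation of $x_i+y_i+z_i$, and the boundary conditions $\dot x_i=\delta_i z_i\ge0$, $\dot y_i\ge0$, $\dot z_i=\gamma_i y_i\ge0$ to obtain forward invariance of the nonnegative orthant. Your version is in fact more careful in two places. The paper treats the sign conditions one coordinate at a time and passes from invariance of $\{y=\0\}$ to invariance of $\{y\ge\0\}$ without justification; this is exactly the circularity that your first-exit-time or quasi-positivity argument resolves. The paper also attributes global existence directly to Cauchy--Lipschitz, whereas you correctly derive $T_{\max}=\infty$ from the compactness of $\mathcal X$.
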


\begin{proof}
	See Appendix \ref{app:well_pos}
\end{proof}
Since $x_i+y_i+z_i=1$ for all $i$, we can equivalently study the reduced system 
	\be\label{sistema_ridotto_network}
		\begin{cases}
			\dot y_i = \displaystyle\sum_{j} W_{ij}\,y_j\,(1-y_i-z_i)-\gamma_i y_i\\[6pt]
			\dot z_i = \gamma_i y_i - \delta_i z_i\,,
		\end{cases}
	\ee
which can be written in the compact form
	\be\label{sistema_ridotto_vettoriale}
		(\dot{y}, \dot{z})=f(y,z)=
		\begin{pmatrix}
			[1-y-z]Wy-[\gamma] y\\
			\ga y-\de z
		\end{pmatrix}
	\ee

From an epidemiological viewpoint, once the infection is introduced in the network, it cannot disappear instantaneously. The following result formalizes this intuition by proving that, if the infected fraction is initially strictly positive in at least one node, then the infection remains strictly positive in all nodes at every positive time. 
\begin{proposition}\label{theo:system_pos}
	Consider the network SIRS epidemic model  \eqref{sirs_network_sistema} with irreducible interaction matrix $W$ and positive recovery and loss-of-immunity rate profiles $\gamma$ and $\delta$. Then, given an initial condition $(x(0),y(0),z(0))$ in $ \mathcal X$; 
\begin{enumerate}	
\item[(i)] if $y(0)=\mathbf0$, then $y(t)=\mathbf 0$ for every $t\ge0$;
\item[(ii)] if $y(0)\gneq\mathbf0$, then $y(t)>\mathbf 0$ for every $t>0$. 
\end{enumerate}
\end{proposition}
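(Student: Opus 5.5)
Part (i) follows from uniqueness. If $y(0)=\mathbf 0$, consider the candidate trajectory $y(t)\equiv\mathbf 0$, with $z$ solving the linear equations $\dot z_i=-\delta_i z_i$, that is, $z_i(t)=z_i(0)e^{-\delta_i t}$, and $x=\mathbf 1-z$. This candidate satisfies \eqref{sirs_network_sistema} and stays in $\mathcal X$. By the uniqueness part of Lemma \ref{lemma:well_pos}, it is the solution, so $y(t)=\mathbf 0$ for all $t\ge0$.

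For part (ii), I will first show that positivity, once acquired, is never lost. Along any solution in $\mathcal X$ we have $x_i\ge0$, $W_{ij}\ge0$ and $y_j\ge0$, so
\[
\dot y_i=\sum_j W_{ij}y_jx_i-\gamma_i y_i\ \ge\ -\gamma_i y_i .
\]
By comparison (Gr\"onwall), $y_i(t)\ge y_i(s)e^{-\gamma_i(t-s)}$ for all $t\ge s$. Hence if $y_i(s)>0$ for some $s$, then $y_i(t)>0$ for every $t\ge s$. In particular, any node $i$ with $y_i(0)>0$ satisfies $y_i(t)>0$ for all $t\ge0$.

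The key step is propagation along edges. Suppose $W_{ij}>0$, and $y_j(t)>0$ for all $t\in(s,s+\epsilon)$ (or for all $t\ge s$). I claim that $y_i$ is then strictly positive on $(s,s+\epsilon)$. The obstacle is that $x_i$ might vanish. I handle this in two steps.
\begin{enumerate}
\item[(a)] Note that $\dot x_i\ge -\big(\sum_j W_{ij}y_j\big)x_i\ge -\big(\sum_j W_{ij}\big)x_i$, since $y_j\le 1$. Moreover, $z_i(t)$ and $x_i(t)$ cannot both vanish near a time where $y_i=0$, because $x_i+y_i+z_i=1$.
\item[(b)] If $y_i(t^*)=0$ at some $t^*\in(s,s+\epsilon)$, then $x_i(t^*)+z_i(t^*)=1$. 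If $x_i(t^*)>0$, then $\dot y_i(t^*)\ge W_{ij}y_j(t^*)x_i(t^*)>0$. If $x_i(t^*)=0$, then $z_i(t^*)=1$ and $\dot x_i(t^*)=\delta_i>0$, so $x_i>0$ immediately after $t^*$.
\end{enumerate}

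To make this rigorous without such case analysis, I will use the variation-of-constants formula
\[
y_i(t)=e^{-\gamma_i(t-s)}y_i(s)+\int_s^t e^{-\gamma_i(t-\tau)}\sum_k W_{ik}y_k(\tau)x_i(\tau)\,d\tau .
\]
This is at least $\int_s^t e^{-\gamma_i(t-\tau)}W_{ij}y_j(\tau)x_i(\tau)\,d\tau$. The integrand is continuous and nonnegative, so it suffices to show that $x_i$ is not identically zero on any subinterval $(s,t)$. If $x_i\equiv0$ on an interval, then $\dot x_i=\delta_i z_i\equiv0$ there, so $z_i\equiv0$ and $y_i\equiv1>0$. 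Thus in either case $y_i(t)>0$ for all $t>s$.

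Finally, I iterate using irreducibility. Let $j_0$ satisfy $y_{j_0}(0)>0$. For any node $i$, irreducibility of $W$ gives a path $i=i_m\to i_{m-1}\to\cdots\to i_0=j_0$ with $W_{i_{k}i_{k-1}}>0$. By induction on $k$, using the propagation step with $s=0$, we get $y_{i_k}(t)>0$ for all $t>0$. Therefore $y(t)>\mathbf 0$ for every $t>0$. The main obstacle is the propagation step, specifically ruling out $x_i$ vanishing on an interval; the argument above handles it.
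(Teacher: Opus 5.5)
Your proof is correct, and for part (ii) it takes a genuinely different route from the paper's. The paper works with Taylor expansions at $t=0$. Its auxiliary Lemma~\ref{lemma:positivness} uses the Leibniz rule and induction on the length $\alpha_{ij}$ of a shortest path from $i$ to a node $j$ with $y_j(0)>0$. It shows that $y_i$ has a positive derivative at $0$ of order $\alpha_{ij}+m_{ij}$, with all lower-order derivatives nonnegative. Here $m_{ij}$ counts the path nodes with $z_k(0)=1$, i.e., $x_k(0)=0$, each of which delays transmission by one derivative order. This only yields $y(t)>\mathbf 0$ on some interval $(0,\epsilon)$. The paper then passes to all $t>0$ by appealing to invariance of $\{y=\mathbf 0\}$, which by itself does not prevent a single component from returning to zero later.

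Your comparison bound $y_i(t)\ge y_i(s)e^{-\gamma_i(t-s)}$ supplies exactly this missing persistence. Your variation-of-constants formula propagates positivity along each edge over whole time intervals. You handle the degenerate case $x_i\equiv 0$ by observing that it forces $z_i\equiv 0$ and $y_i\equiv 1$. This avoids higher-order derivatives and the $m_{ij}$ bookkeeping entirely, and it makes your heuristic items (a)--(b) superfluous.

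What the paper's approach offers in return is quantitative local information. It bounds the order of the first positive derivative of each $y_i$ at $t=0$ in terms of graph distance and the number of fully recovered nodes along the path, which your integral argument does not provide. Your part (i) is the paper's observation ($y=\mathbf 0$ implies $\dot y=\mathbf 0$), made precise through uniqueness.
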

\begin{proof}
	See Appendix \ref{app:lemma_positiveness}. 
\end{proof}

\section{Equilibrium points}\label{sec:equilibria}
In this section, we study the existence and uniqueness of equilibrium points of the network SIRS epidemic model. 
Clearly, $(\mathbf 1,\mathbf 0,\mathbf 0)$ is always an equilibrium point of \eqref{sirs_network_sistema}; we refer to it as the disease-free equilibrium. Observe that, if $(x^*,y^*,z^*)$ with $y^*\gneq\0$ is another equilibrium point, then, by Proposition \ref{theo:system_pos}(ii), it must be $y^*>\0$. We refer to any such equilibrium point  as an endemic equilibrium. 

For every $\alpha $ in $ \mathbb R_+^{\mathcal{V}}$, let $\ov y$ in $\R^{\mc V}$ be the vector with entries 
$\ov y_i=(1+\alpha_i)^{-1}$ for every $i$ in $\mc V$, and let 
$\Psi:\mathbb R_+^{\mathcal{V}}\times \mathbb R_+^{\mathcal{V}}\to \mathbb R_+^{\mathcal{V}}$ be defined by
\be\label{def_Psi}
	(\Psi(y,\alpha))_i=\dfrac{y_i}{1+(1+\alpha_i)y_i}\,,\qquad \forall i\in\mc V\,.
\ee
Then, consider the set
$\mathcal Y_{\alpha} = \{y\in\R^{\mc V}:\0\le y\le \ov y\}$ 
and define 
$\Phi: \mc Y_{\alpha}\times \mathbb R_+^{\mathcal{V}\times \mathcal V}\times \mathbb R_+^{\mathcal{V}}\to \mc Y_{\alpha}$ by
\be\label{def_Phi}
	\Phi(y,M,\alpha)=\Psi(My,\alpha).
\ee
The next result gathers some properties of the function $\Phi$.
\begin{lemma}\label{lem:H}
	Consider the function $\Phi$ defined in \eqref{def_Phi}. Then:
	\begin{enumerate}
		\item[(i)] \label{lem:H-i}$\Phi(\mathbf0,M,\alpha)=\mathbf0$, for every $M$ and $\alpha$; 
		\item[(ii)] \label{lem:H-iii}$\Phi(y,M,\alpha)<My$, for every $y>\0$ and irreducible $M$; 
		\item[(iii)] $\Phi(y,M,\alpha)$ is nondecreasing in $y$ and $M$, and nonincreasing in $\alpha$.
	\end{enumerate}
\end{lemma}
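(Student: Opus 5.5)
The plan is to reduce everything to the scalar map $\psi_a(s)=s/(1+(1+a)s)$ for $s\ge0$, $a\ge0$, since $\Phi(y,M,\alpha)_i=\psi_{\alpha_i}((My)_i)$ acts componentwise on the vector $My$. We first record the elementary properties of $\psi_a$. It satisfies $\psi_a(0)=0$. Its derivative is
\[
\psi_a'(s)=\frac{1}{(1+(1+a)s)^2}>0,
\]
so $\psi_a$ is increasing in $s$. For fixed $s\ge0$ its denominator is nondecreasing in $a$, so $\psi_a(s)$ is nonincreasing in $a$. Finally, for $s>0$ we have $1+(1+a)s>1$, hence $\psi_a(s)<s$.

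Item (i) follows at once: $M\mathbf 0=\mathbf 0$ and $\psi_{\alpha_i}(0)=0$ for every $i$.

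For item (ii), fix $y>\mathbf 0$ and $M$ irreducible. Irreducibility (with $|\mathcal V|\ge2$; in the trivial one-node case one needs $M_{ii}>0$) implies that every row of $M$ has at least one positive entry, since otherwise node $i$ would reach no other node. Hence $(My)_i=\sum_j M_{ij}y_j>0$ for every $i$, because $y>\mathbf 0$. The strict inequality $\psi_a(s)<s$ for $s>0$ then gives $\Phi(y,M,\alpha)_i<(My)_i$ componentwise.

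For item (iii), monotonicity in $y$ and in $M$ both go through $My$. Since $M\ge0$ entrywise, $y\le y'$ implies $My\le My'$. Since $y\ge0$, $M\le M'$ entrywise implies $My\le M'y$. Composing with the componentwise increasing map $\psi_{\alpha_i}$ preserves these orders. Monotonicity in $\alpha$ is the componentwise statement that $\psi_a(s)$ is nonincreasing in $a$.

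The step most likely to need care is not any of these inequalities but the well-definedness claim implicit in defining $\Phi$ with codomain $\mathcal Y_\alpha$, in case one wants $\Phi$ to map $\mathcal Y_\alpha$ into itself. This is also immediate from the scalar map: for $s\ge0$,
\[
\psi_a(s)=\frac{1}{1/s+(1+a)}<\frac{1}{1+a}=\overline y_i \quad (s>0),
\]
so $0\le\Phi(y,M,\alpha)\le\overline y$ for any nonnegative $y$ and $M$. I would state this briefly so that the codomain in \eqref{def_Phi} is justified. The only genuine subtlety is the use of irreducibility to guarantee that $My$ has no zero rows in item (ii).
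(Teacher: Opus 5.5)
Your proof is correct and takes essentially the same route as the paper: both verify (i)--(iii) by elementary computation on the explicit formula. The paper checks $\Psi_i(y,\alpha)<y_i$ for $y_i>0$ and computes the signs of $\partial\Phi_i/\partial y_j$, $\partial\Phi_i/\partial M_{ij}$ and $\partial\Phi_i/\partial\alpha_i$, whereas you factor through the scalar map $\psi_a$ and the order-preserving map $(y,M)\mapsto My$. You also fill in two points the paper leaves implicit: the use of irreducibility to get $My>\0$ in (ii), and the check that $\Phi$ maps into $\mathcal Y_\alpha$.
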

\begin{proof}
	See Appendix \ref{app:lem_H}.
\end{proof}\medskip

The following result reduces the problem of finding equilibrium points of the network SIRS epidemic model \eqref{sirs_network_sistema} to that of finding fixed points of $\Phi(\,\cdot\,,\ga^\inv W,\de^\inv \gamma)$ on $\mc Y_{\alpha}$.
\begin{lemma}\label{equivalence_equilibrium_fixed}
	Consider the network SIRS epidemic model \eqref{sirs_network_sistema} with irreducible interaction matrix $W$ and positive recovery and loss-of-immunity rate profiles $\gamma$ and $\delta$.  Then, $(x^*,y^*,z^*)$ in $\mathcal X$ is an equilibrium point if and only if $$
	y^*=\Phi(y^*,\ga^\inv W,\de^\inv \gamma)\,,\qquad  z^*=\ga \de^\inv y^*\,,$$and $x^*=\1-y^*-z^*$.
\end{lemma}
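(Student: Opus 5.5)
Setting the right-hand side of the reduced system \eqref{sistema_ridotto_network} to zero, together with $x^*=\1-y^*-z^*$ (which is forced by the definition of $\mc X$), gives two vector equations. I will show that they are equivalent to the fixed-point equation for $\Phi$ combined with the formula for $z^*$.

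The $z$-equation $\gamma_i y^*_i-\delta_i z^*_i=0$ holds if and only if $z^*=\ga\de^\inv y^*$, since $\delta_i>0$. Write $\alpha=\de^\inv\gamma$, so that $z^*_i=\alpha_i y^*_i$. Substituting into the $y$-equation gives, for every $i$,
\[
\sum_j W_{ij}y^*_j\,\bigl(1-(1+\alpha_i)y^*_i\bigr)=\gamma_i y^*_i .
\]
Set $u=\ga^\inv W y^*$ and divide by $\gamma_i>0$. The equation becomes $u_i\bigl(1-(1+\alpha_i)y^*_i\bigr)=y^*_i$, that is, $y^*_i\bigl(1+(1+\alpha_i)u_i\bigr)=u_i$. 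The factor $1+(1+\alpha_i)u_i$ is at least $1$ because $u\ge\0$ (this uses $y^*\ge\0$ and $W\ge0$). So the equation is equivalent to $y^*_i=u_i/(1+(1+\alpha_i)u_i)=(\Psi(u,\alpha))_i$, i.e.\ $y^*=\Phi(y^*,\ga^\inv W,\alpha)$. Each step in this chain is reversible, so the two forms of the $y$-equation are equivalent.

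It remains to check the domains in both directions.
\begin{itemize}
\item \emph{Forward.} Let $(x^*,y^*,z^*)\in\mc X$ be an equilibrium. Then $y^*\ge\0$, and $z^*=\alpha\circ y^*$ with $x^*\ge\0$ gives $(1+\alpha_i)y^*_i\le1$. Hence $y^*\in\mc Y_\alpha$, which is where $\Phi$ is defined.
\item \emph{Converse.} Let $y^*\in\mc Y_\alpha$ be a fixed point, and define $z^*$ and $x^*$ as in the statement. Then $y^*\ge\0$ and $z^*\ge\0$. Also $x^*_i=1-(1+\alpha_i)y^*_i\ge0$ because $y^*\le\ov y$, and $x^*+y^*+z^*=\1$. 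So the triple lies in $\mc X$.
\end{itemize}
By the equivalence above, this triple is an equilibrium, and its $x$-equation holds automatically since $\dot x=-\dot y-\dot z$.

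The only delicate point is this domain bookkeeping: the bound $y^*\le\ov y$ must correspond exactly to $x^*\ge\0$, and the division by $1+(1+\alpha_i)u_i$ must be legitimate. Both follow from nonnegativity, so I expect no real obstacle. Irreducibility of $W$ is not needed here.
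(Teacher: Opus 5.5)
Your proof is correct and follows essentially the same route as the paper: solve the $z$-equation to get $z^*=\ga\de^\inv y^*$, substitute into the $y$-equation, and invert the scalar relation between $(\ga^\inv Wy^*)_i$ and $y_i^*$ by means of $\Psi$. Your cross-multiplied form, which only divides by $1+(1+\alpha_i)u_i\ge 1$, is slightly cleaner than the paper's step: the paper divides by $1-(1+\alpha_i)y_i^*=x_i^*$ without noting that this is nonzero at an equilibrium. Your explicit check that $x^*\ge\0$ is equivalent to $y^*\le\ov y$ also supplies the domain bookkeeping that the paper leaves implicit.
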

\begin{proof}
	See Appendix \ref{app:proof_equivalence}.
\end{proof}\medskip

	\begin{theorem}\label{theorem_fixed_point}
	Let $M$ in $\mathbb R_+^{\mathcal{V}\times\mathcal{V}}$ be irreducible and let $\lambda_{M}=\rho(M)$ be its dominant eigenvalue. 
	For $\alpha$ in $\mathbb R_+^{\mathcal{V}}$ and  $\xi_0$ in $\mc Y_{\alpha}$, let $(\xi_0,\xi_1,\ldots)\subseteq\mc Y_{\alpha}$ be generated by the recursion 
	\be\label{definition_discrete_time}\xi_{k+1}=\Phi(\xi_k,M,\alpha)\,, \qquad k=0,1,\ldots\,.\ee
	Then:
	\begin{enumerate}
		\item[(i)] \label{fixed_point1} if $\lambda_{M}\le 1$, $\mathbf0$ is the unique fixed point of $\Phi$ in $\mc Y_{\alpha}$ and for every $\xi_0$ in $ \mathcal Y_\alpha$, $\lim_{k\to\infty} \xi_k=\0$;
		\item[(ii)]\label{fixed_point2} if $\lambda_{M}>1$, $\Phi$ admits two distinct fixed points  in $\mc Y_{\alpha}$: $\mathbf{0}$ and $y^*>\0$, and 
		$\lim_{k\to\infty} \xi_k=y^*$ for every $\xi_0>\0$.
	\end{enumerate}
\end{theorem}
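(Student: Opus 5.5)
The plan is to use the monotonicity of $\Phi$ in $y$ from Lemma~\ref{lem:H}(iii), together with Perron--Frobenius theory for the irreducible matrix $M$. Throughout, let $v>\0$ be a left Perron eigenvector of $M$, so $v^\top M=\lambda_M v^\top$, and let $u>\0$ be a right Perron eigenvector, so $Mu=\lambda_M u$.

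\emph{Case (i), $\lambda_M\le1$.} Suppose $y\in\mc Y_\alpha$ is a fixed point with $y\gneq\0$. By Proposition-type irreducibility reasoning, $y=\Psi(My,\alpha)$ has the same support as $My$, and iterating this forces $y>\0$. Lemma~\ref{lem:H}(ii) then gives $y=\Phi(y,M,\alpha)<My$. Multiplying by $v^\top$ yields $v^\top y<\lambda_M v^\top y\le v^\top y$, which is a contradiction.

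For convergence, I would use $V(\xi)=v^\top\xi$ as a Lyapunov function. Since $\Psi(w,\alpha)_i\le w_i$, we have $V(\xi_{k+1})\le v^\top M\xi_k\le V(\xi_k)$. Hence $V(\xi_k)$ decreases to some limit. Take any limit point $\bar\xi$ of the bounded sequence; by continuity it satisfies $V(\Phi(\bar\xi))=V(\bar\xi)$. Equality in $\Psi(w)_i\le w_i$ forces $(M\bar\xi)_i=0$ wherever $v_i>0$, i.e.\ everywhere. Then $V(\bar\xi)=V(\Phi(\bar\xi))=0$, so $\bar\xi=\0$, and therefore $\xi_k\to\0$.

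\emph{Case (ii), $\lambda_M>1$.} The approach is a sandwich argument.

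\emph{Upper bound.} The point $\ov y$ is a supersolution, since $\Phi(\ov y)\le\ov y$ by the definition of $\Phi$ mapping into $\mc Y_\alpha$. Monotonicity then makes the iterates from $\ov y$ nonincreasing, converging to a fixed point $y^+$.

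\emph{Lower bound.} For small $\eps>0$, $\eps u$ is a subsolution: $\Psi(\eps\lambda_M u)_i=\eps\lambda_M u_i/(1+(1+\alpha_i)\eps\lambda_M u_i)\ge\eps u_i$ as soon as $(1+\alpha_i)\eps\lambda_M u_i\le\lambda_M-1$. The iterates from $\eps u$ are therefore nondecreasing and converge to a fixed point $y^-\ge\eps u>\0$.

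\emph{Sandwich.} Given any $\xi_0>\0$, choose $\eps$ so small that $\eps u\le\xi_0$. Since $\xi_0\le\ov y$, monotonicity sandwiches $\xi_k$ between the two monotone sequences. So convergence to $y^*$ follows once uniqueness of positive fixed points is proved.

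\emph{Uniqueness (main obstacle).} I would exploit the subhomogeneity of $\Psi$: for $t\in(0,1)$ and $w>\0$, $\Psi(tw)_i=tw_i/(1+(1+\alpha_i)tw_i)>t\Psi(w)_i$. Let $a,b>\0$ be two positive fixed points and set $t=\max\{s: sa\le b\}$. If $t<1$, then $b=\Phi(b)\ge\Phi(ta)>t\Phi(a)=ta$ componentwise, using $Ma>\0$ by irreducibility. This strict inequality contradicts the maximality of $t$. Hence $t\ge1$, i.e.\ $a\le b$, and by symmetry $a=b$. Combined with case (i)'s argument that any nonzero fixed point is positive, this gives exactly the two fixed points $\0$ and $y^*=y^-=y^+$.

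The delicate points are verifying strict subhomogeneity and the positivity of nonzero fixed points via irreducibility. Both appear routine, so I expect the proof to go through as outlined.
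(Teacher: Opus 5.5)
Your proof is correct. In part (ii) its skeleton is the paper's: monotone iterates down from $\bar y$ and up from $\varepsilon u$, followed by a sandwich. Part (i) and the uniqueness step, however, go differently.

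\textbf{Part (i).} The paper builds the maximal fixed point $\bar\xi^*$ as the limit of the iterates from $\bar y$. It excludes $\bar\xi^*>\mathbf 0$ via the Collatz--Wielandt max--min formula, and then sandwiches $\xi_k$ between $\mathbf 0$ and $\bar\xi_k$. You instead use the left Perron vector as a linear Lyapunov function together with a limit-point argument. This gives convergence for every $\xi_0$ directly, and hence uniqueness, since a fixed point is a constant orbit. The paper's route, by contrast, recycles the monotone machinery it needs anyway in (ii).

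\textbf{Uniqueness in (ii).} The paper assumes $\underline{\xi}^*\lneq\bar\xi^*$ and picks the node maximizing $\bar\xi^*_j/\underline{\xi}^*_j$. It then contradicts the equilibrium equation written with $W$, $\gamma_1$, $\delta_1$. So it passes through Lemma~\ref{equivalence_equilibrium_fixed}, is phrased for $M=[\gamma]^{-1}W$ and $\alpha=[\delta]^{-1}\gamma$, and uses the a priori ordering $\underline{\xi}^*\le\bar\xi^*$. Your argument combines strict subhomogeneity $\Psi(tw)>t\Psi(w)$ with $t=\min_i b_i/a_i$. This is the standard Krasnosel'skii-type argument for monotone sublinear maps. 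It works at the level of $\Phi$ for any irreducible $M$ and $\alpha\ge\mathbf 0$, exactly as the theorem is stated, and needs no ordering between the two fixed points. It is essentially the min-ratio dual of the paper's max-ratio argument.

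\textbf{Your support step fills a gap.} You show that a nonzero fixed point of $\Phi$ is strictly positive, by irreducibility. The paper leaves this implicit. Its proof of (i) only treats the case $\bar\xi^*>\mathbf 0$ before concluding $\bar\xi^*=\mathbf 0$, and in (ii) it only compares positive fixed points. So the claim that $\mathbf 0$ and $y^*$ are the \emph{only} fixed points in $\mathcal Y_\alpha$ rests on exactly the observation you supply.
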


\begin{proof}
Let $\{\ov \xi_k\}_k$ be the sequence generated by the recursion \eqref{definition_discrete_time} with initial condition $\ov \xi_0=\ov y$. Since $\ov y$ is the maximal element of $\mathcal Y_{\alpha}$,  
we have that $\bar \xi_1\le \bar y = \bar \xi_0$, and since $\Phi$ is nondecreasing in $y$ by \Cref{lem:H}(iii), we get that 
			$$\bar \xi_{k+1}=\Phi^k(\bar \xi_1,M,\alpha)\le \Phi^k(\bar \xi_0,M,\alpha)=\bar \xi_k\,,$$
			for every $k\ge0$, i.e., the sequence $(\bar \xi_k)$ is nonincreasing and, since $\mathcal Y_{\alpha}$ is compact, it must converge to some limit $\bar \xi^*=\lim_{k\to\infty}\bar \xi_k$. 
			Continuity of $\Phi$ implies that 
$$\bar \xi^*\!=\!\lim_{k\to\infty}\bar \xi_k\!=\!\lim_{k\to\infty}\bar \xi_{k+1}\!=\! \lim_{k\to\infty}\Phi(\bar \xi_k,M,\alpha)\!=\!\Phi(\bar \xi^*,M,\alpha)\,,$$
			i.e., $\bar \xi^*$ is a fixed point of $\Phi(\,\cdot\,,M,\alpha)$. Moreover, monotonicity of $\Phi(y,M,\alpha)$ in $y$ (Lemma \ref{lem:H-iii}(iii)) implies that $y^*\le\bar \xi^*$ for every other fixed point $y^*=\Phi(y^*,M,\alpha)$ in $\mc Y_{\alpha}$.

(i) From \Cref{lem:H}(i), $\0=\Phi(\0,M,\alpha)$ is a fixed point for every $M$ and $\alpha$. 
If $\bar \xi^*>\0$, then \Cref{lem:H}(ii) implies that $\bar \xi^*=\Phi(\bar \xi^*,M,\alpha)<M\bar \xi^*$, so that  
			$\min_i{(M\bar \xi^*)_i}/{\bar \xi^*_i}>1$, and the max--min characterization of the dominant eigenvalue of nonnegative irreducible matrices \cite[Chapter 2.2, Equation (2.11)]{berman1994nonnegative} implies that
		\be\lambda_M=\max_{x>0} \min_{i\in\mathcal V}{(Mx)_i}/{x_i}\ge\min_{i\in\mc V} {(My)_i}/{y_i}>1\,.\ee
Therefore, if $\lambda_{M}\le1$, then $\bar\xi^*=\mathbf0$ is the unique fixed point of $\Phi(\,\cdot\,,M,\alpha)$. Moreover, for every $\xi_0$ in $ \mathcal Y_\alpha$, 
			$\mathbf 0 \le \xi_0 \le \bar y$ and Lemma \ref{lem:H-iii}(iii) imply that 
$$\mathbf 0 = \Phi^k (\0,M,\alpha)\le \Phi^k(\xi_0,M,\alpha)\le \Phi^k(\bar y,M,\alpha)=\bar \xi_{k+1}\,,$$
for every $k\ge0$, so that $\mathbf 0 \le \lim_{k} \xi_k\le \lim_{k} \bar \xi_k=\bar \xi^*=\0\,.$
Hence, $\lim_{k} \xi_k=\mathbf 0 $ for every $\xi_0$ in $\mc Y_{\alpha}$, thus completing the proof of point (i). 
			
(ii) Let $v>\0$ be a right eigenvector of the irreducible nonnegative matrix $M$ associated to $\lambda_M$. 
		For $\varepsilon>0$, $\Phi(\varepsilon v,M,\alpha)=\Psi\!\left(M\varepsilon v,\alpha\right)=\Psi\!\left(\lambda_{M}\varepsilon v,\alpha\right)$. 
		Since $\Phi$ is differentiable, $\Phi(\0,M,\alpha)
		\!=\!\0$, and 
			$\nabla \Phi(\0)\!=\!\nabla \Psi(\0)M\!=\!M$,  
		$$\Phi(\varepsilon v,M,\alpha)=M\varepsilon v+o(\varepsilon)
				=\lambda_{M} \varepsilon v + o(\varepsilon)\,,\quad\text{as }\eps\to0\,.$$ 
If $\lambda_{M}>1$, then there exists $ \varepsilon^*>0$ such that 
		\be\label{Phi(y0)_gey0}\Phi(\varepsilon^*v)\ge\varepsilon^*v\,.\ee 
		Let $\ul \xi_0=\varepsilon^*v$ and $\ul \xi_k$, for $k\ge1$, be generated by the recursion \eqref{definition_discrete_time}. 
			Then, from \eqref{Phi(y0)_gey0}, $\ul \xi_1=\Phi(\ul \xi_0)\ge \ul \xi_0$, and monotonicity of $\Phi$ implies that $(\ul \xi_k)$ is a nondecreasing bounded sequence. Arguing as done earlier in the proof for $\bar\xi_k$, $\ul\xi_k$ necessarily converges to some positive fixed point $\ul \xi^*=\Phi(\ul \xi^*,M,\alpha)>\0$.
		By \Cref{equivalence_equilibrium_fixed}, $\ul \xi^*$ and $\ov \xi^*$ are equilibrium points of \eqref{sirs_network_sistema}. 
		Assume by contradiction that $\ul \xi^*\lneq \ov \xi^*$. 
		Without loss of generality, assume that 
		\be\label{eq5+6}
			\ul \xi^*_1<\ov \xi^*_1\,,\qquad 
{\ov \xi_j^*}/{\ul \xi_j^{*}}\le{\ov \xi^*_1}/{\ul \xi_1^{*}} , \qquad \forall j\in\mathcal V .\ee
		Then, we have
		\begin{align*}
			0 &= \frac{\ul \xi_1^{*}}{\ov \xi^*_1} \left( (1-(1+\frac{\gamma_1}{\delta_1})\ov \xi^*_1) \sum\nolimits_j W_{1j} \ov \xi_j^* - \gamma_1 \ov \xi_1^* \right) \\[0pt]
			&<  (1-(1+\frac{\gamma_1}{\delta_1})\ul \xi_1^{**}) \sum\nolimits_j W_{1j} \frac{\ul \xi_1^{*}}{\ov \xi_1^*} \ov \xi_j^* - \gamma_1 \ul \xi_1^{*} \\[0pt]
			&\leq  (1-(1+\frac{\gamma_1}{\delta_1})\ul \xi_1^{*}) \sum\nolimits_j W_{1j}\ul \xi_j^{*} - \gamma_1 \ul \xi_1^{*} \\[6pt]
			&= 0 ,
		\end{align*}
		where the first and last equalities follow from the fact that both $\ul \xi^*$ and $\ov \xi^{*}$ are equilibrium points, and the inequalities follow  from \eqref{eq5+6}. As we have reached a contradiction, we can conclude that $\ul \xi^* =\ov \xi^{*}=y^*$, hence there exists a unique fixed point $y^*$ in $\mc Y_\alpha$ of $\Phi$ such that $y^*>\0$.
		
		For every $\xi_0>\0$,  let $\xi_k$ for $k\ge1$ be generated by the recursion \eqref{definition_discrete_time}. Then there exists $\varepsilon>0$ such that $\0<\varepsilon v\le \xi_0$ and $\Phi(\varepsilon v)>\varepsilon v$. Hence, we have
$\varepsilon v\le \xi_0 \le \ov y$.
		Consider the nondecreasing sequence $\{\ul \xi_k\}_k$ defined above with initial condition $\ul \xi_0=\varepsilon v$ and the nonincreasing one $\{\ov \xi_k\}_k$. Hence,
			\[
			y^*=\lim_{k\to\infty}\Phi^k(\ul \xi_0)\le \lim_{k\to\infty}\Phi^k(\xi_0) \le \lim_{k\to\infty}\Phi^k(\ov \xi_0) =y^*
			\]
			Hence, $\lim_{k} \xi_k=y^*>\0 $ for every $\xi_0>\0$, thus completing the proof of point (ii). 
\end{proof}\medskip

Theorem \ref{theorem_fixed_point} suggests to define the basic reproduction number as
\be R_0=\rho(\ga^\inv W)=\lambda_{\ga^\inv W}
\,,\ee
in order to get the following result. 
\begin{corollary}\label{corollario_sirs_equilibrium}
	Consider the network SIRS epidemic model  \eqref{sirs_network_sistema} with irreducible interaction matrix $W$ in $\mathbb R_+^{\mathcal V\times \mathcal V}$ and positive recovery and loss-of-immunity rate vector $\gamma$ and $\delta$ in $ \mathbb R_{++}^{\mathcal{V}}$. 
	 Then,
	\begin{enumerate}
		\item[(i)] \label{corollario_1}if $R_0\le 1$, the disease-free equilibrium  $\mathbf 0$ is the unique equilibrium point,
		\item[(ii)] \label{corollario_2}if $R_0>1$, the system admits the disease-free equilibrium $\mathbf 0$ and the endemic equilibrium point $y^*>\mathbf0$
		\item[(iii)] \label{corollario_3}$y^*$ is a nondecreasing function of $\ga^\inv W$ and nonincreasing function of $\de^\inv \gamma$.
	\end{enumerate}
\end{corollary}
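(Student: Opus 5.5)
Parts (i) and (ii) follow by combining \Cref{equivalence_equilibrium_fixed} with Theorem~\ref{theorem_fixed_point}, applied with $M=\ga^\inv W$ and $\alpha=\de^\inv\gamma$. First I check that the hypotheses match. Since $\gamma>\0$, the matrix $M=\ga^\inv W$ is nonnegative and has the same zero pattern as $W$, so it is irreducible and $\lambda_M=R_0$. Since $\delta>\0$, the vector $\alpha=\de^\inv\gamma$ lies in $\R_+^{\mc V}$.

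I also need the $y$-component of every equilibrium in $\mc X$ to lie in $\mc Y_\alpha$, so that the uniqueness statements of Theorem~\ref{theorem_fixed_point} (which are stated on $\mc Y_\alpha$) cover all equilibria. This follows from $z^*=\ga\de^\inv y^*$ and $x^*\ge\0$: together they give $y^*_i(1+\alpha_i)\le1$, i.e.\ $y^*\le\ov y$. Conversely, any fixed point $y^*\in\mc Y_\alpha$ produces, through $z^*=\ga\de^\inv y^*$ and $x^*=\1-y^*-z^*$, a triple in $\mc X$, because $x^*\ge\0$ is exactly the constraint $y^*\le\ov y$. Hence \Cref{equivalence_equilibrium_fixed} gives a bijection between equilibria in $\mc X$ and fixed points of $\Phi(\cdot,M,\alpha)$ in $\mc Y_\alpha$.

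With this bijection, (i) is Theorem~\ref{theorem_fixed_point}(i): if $R_0\le1$, the only fixed point is $\0$, which corresponds to the disease-free equilibrium $(\1,\0,\0)$. Likewise, (ii) is Theorem~\ref{theorem_fixed_point}(ii): if $R_0>1$, the fixed points are exactly $\0$ and a unique $y^*>\0$, which correspond to the disease-free equilibrium and a unique endemic equilibrium.

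For (iii), I use the characterization of $y^*$ as a limit, from the proof of Theorem~\ref{theorem_fixed_point}: $y^*=\lim_k\ov\xi_k$, with $\ov\xi_0=\ov y$ and $\ov\xi_{k+1}=\Phi(\ov\xi_k,M,\alpha)$. This limit is also valid when $R_0\le1$, where it gives $y^*=\0$, so the comparison works across the threshold. Take $M\le M'$ and $\alpha\ge\alpha'$. Then $\ov y(\alpha)\le\ov y(\alpha')$, so $\mc Y_\alpha\subseteq\mc Y_{\alpha'}$. I prove by induction that $\ov\xi_k(M,\alpha)\le\ov\xi_k(M',\alpha')$ for all $k$. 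The base case is $\ov y(\alpha)\le\ov y(\alpha')$. For the inductive step, \Cref{lem:H}(iii) gives
$$\Phi(\ov\xi_k,M,\alpha)\le\Phi(\ov\xi_k',M,\alpha)\le\Phi(\ov\xi_k',M',\alpha)\le\Phi(\ov\xi_k',M',\alpha'),$$
using monotonicity in $y$, then in $M$, then in $\alpha$. Passing to the limit gives $y^*(M,\alpha)\le y^*(M',\alpha')$.

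The main delicate point is in this last step. $\Phi$ is defined on $\mc Y_\alpha\times\cdots$, and the domain changes with $\alpha$. I resolve this by viewing $\Psi(My,\alpha)$ as a formula defined for all $y\ge\0$; monotonicity in each argument holds on this larger domain by the same computation as in \Cref{lem:H}. A smaller caveat is that monotonicity in $M$ should be read over nonnegative matrices with a fixed irreducibility pattern, or over all nonnegative matrices, since the limit argument does not need irreducibility.
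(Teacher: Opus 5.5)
Your proposal is correct and follows the paper's route: (i)--(ii) come from \Cref{equivalence_equilibrium_fixed} and Theorem~\ref{theorem_fixed_point} with $M=\ga^\inv W$ and $\alpha=\de^\inv\gamma$, and (iii) comes from the monotonicity of $\Phi$ in \Cref{lem:H}(iii). The paper compresses all of this into a short citation, and your write-up makes the hidden steps explicit: the hypothesis checks, the bijection between equilibria in $\mc X$ and fixed points in $\mc Y_\alpha$, and the inductive comparison of the top-down iterates $\ov\xi_k$ (extending $\Psi(My,\alpha)$ to all $y\ge\0$ to handle the $\alpha$-dependent domain).
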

\begin{proof}
	 Using \Cref{equivalence_equilibrium_fixed}, (i),(ii) follow directly from \Cref{theorem_fixed_point} with $M=\ga^\inv W$ and $\alpha=\de^\inv \gamma$. 
	Monotonicity of $y^*$ in $M=\ga^\inv W$ and $\alpha=\de^\inv \gamma$ follow from \Cref{lem:H}(iii)(b)(c).
\end{proof}\medskip

\begin{remark}
	\Cref{theorem_fixed_point} implies that, if $R_0>1$, then the endemic equilibrium point $y^*>\0$ is the limit of the sequence generated by recursion \be\label{algo}\xi_{k+1}=\Phi(\xi_k,\ga^{-1}W,\de^\inv \gamma)\,,\qquad k\ge0\,,\ee initiated with any $\xi_0>\0$ in $\mc Y_{\alpha}$. Recursion \eqref{algo} effectively provides an interative distributed algorithm for the computation of the endemic equilibrium point of the network SIRS epidemic model \eqref{sirs_network_sistema}. However, convergence of this algorithm does not have any direct implication on the stability of the endemic equilibrium of the network SIRS epidemic model \eqref{sirs_network_sistema}.
	In fact, observe that \eqref{algo} defines a discrete-time monotone $n$-dimensional dynamical system, while \eqref{sirs_network_sistema} is a continuous-time non-monotone $2n$-dimensional dynamical system. The stability analysis of the network SIRS epidemic model \eqref{sirs_network_sistema} will be the object of the next section. 
\end{remark}

\begin{example}\label{ex:out_regular_example}
	We now consider an out-regular graph, which makes it possible to compute the endemic equilibrium. Consider the interaction matrix such that $W\1=\lambda_W \1$ and homogeneous recovery and loss-of-immunity rates $\gamma=\bar \gamma \1$ and $\delta=\bar \delta \1$, for some $\bar\gamma,\bar\delta>0$. In this case, if $R_0=\lambda_W/\bar\gamma>1$, the endemic equilibrium is $y^*=\frac{\bar\delta}{\bar \gamma +\bar\delta}(1-\frac{\bar\gamma}{ \lambda_W})1$. Indeed, since $Wy^*=\lambda_W y^*$, the equilibrium equation \eqref{equilibrium_eq_y} can be reduced as 
	 \[\bigg[1-\frac{\bar \gamma +\bar\delta}{\bar\delta}y^*\bigg]R_0 y^*= y^*\]
	 Solving the equation componentwise yields to $y_i^*=\frac{\bar\delta}{\bar \gamma+\bar\delta}(1-\frac{1}{R_0})$ for every $i$ and, from the uniqueness proved in \Cref{theorem_fixed_point}, the out-regular case with homogeneous rates allows to find the solution explicitly. Having an out-regular graph means all nodes have the same out-degree, so each population has equal potential to spread infection, but not necessarily equal exposure to receiving it.
\end{example}

\section{Stability Analysis}\label{sec:stability}
In this section we analyze the stability of network SIRS epidemic model \eqref{sirs_network_sistema}. We manage to derive the global asymptotic stability of the disease-free equilibrium below the threshold $R_0=1$ and, more relevant, the local asymptotic stability of the endemic equilibrium point over the threshold. We also investigate the region of attraction of the disease-free equilibrium when the endemic equilibrium exists. 
\subsection{Stability of the disease-free equilibrium point}
For an irreducible interaction matrix $W$ and positive recovery rate profile $\gamma$, let $\bar{v}>\0$ the leading left eigenvector of $\ga^\inv W$ and define the  function:
\be\label{lyap_fun}
	V=\bar{v}^T\ga^\inv y\,.
\ee

\begin{lemma}\label{lemma:lyap_fun}
	Consider the network SIRS epidemic model \eqref{sirs_network_sistema} with irreducible interaction matrix $W$ and positive recovery and loss-of-immunity rate profiles $\gamma$ and $\delta$. Then  \be\label{derivative_V}\dot{V}=(R_0-1)\bar{v}^T y-\bar{v}^T\ga^\inv [y+z]Wy\le0\,.\ee 
Moreover, if $R_0\le 1$, then $\dot{V}\le0$ 
	and the largest invariant subset of $\{\dot{V}=0\}$ is $\{ y=\0\}$.
\end{lemma}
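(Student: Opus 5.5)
Differentiating $V=\bar v^T\ga^\inv y$ along \eqref{sistema_ridotto_vettoriale}, with $\bar v^T\ga^\inv W = R_0\bar v^T$ (the defining property of the left eigenvector of $\ga^\inv W$), gives
\begin{align*}
\dot V &= \bar v^T\ga^\inv\big([\1-y-z]Wy-\ga y\big)\\
&= \bar v^T\ga^\inv Wy-\bar v^T\ga^\inv[y+z]Wy-\bar v^Ty\\
&= (R_0-1)\bar v^Ty-\bar v^T\ga^\inv[y+z]Wy\,,
\end{align*}
which is \eqref{derivative_V}. The second term is nonnegative because $\bar v>\0$, $\gamma>\0$, $y,z\ge\0$ and $W\ge0$ on $\mc X$ (Lemma \ref{lemma:well_pos}). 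Hence for $R_0\le1$ both summands are nonpositive and $\dot V\le0$ on the state space. The sign claim in \eqref{derivative_V} is thus read under the hypothesis $R_0\le1$; for $R_0>1$ the first term is positive near $y=\0$.

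For the invariant-set claim, $\{y=\0\}$ is invariant by Proposition \ref{theo:system_pos}(i) and satisfies $\dot V=0$. Conversely, let $\Omega$ be an invariant set contained in $\{\dot V=0\}$, and suppose it contains a point with $y\gneq\0$. By Proposition \ref{theo:system_pos}(ii), the trajectory from that point has $y(t)>\0$ for $t>0$ and remains in $\Omega$. There $\dot V=0$ forces both nonpositive summands to vanish. In particular, $\bar v^T\ga^\inv[y+z]Wy=0$. Since $W$ is irreducible it has no zero row, so $Wy>\0$ whenever $y>\0$. Therefore $(y_i+z_i)(Wy)_i=0$ forces $y_i+z_i=0$, hence $y_i=0$, contradicting $y>\0$. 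So the largest invariant subset is $\{y=\0\}$. This holds for every $R_0\le1$, including $R_0=1$, where the first term vanishes identically and all the information comes from the second term.

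The only delicate point is this last step. For $R_0<1$ the first term alone already forces $y=\0$. At $R_0=1$ one must use the quadratic term together with irreducibility (no zero rows, giving $Wy>\0$) and the positivity from Proposition \ref{theo:system_pos}(ii).
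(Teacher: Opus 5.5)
Your proof is correct. The derivative computation coincides with the paper's. You are also right that the ``$\le0$'' in \eqref{derivative_V} can only hold under $R_0\le1$: the paper itself uses $\dot V>0$ near the disease-free equilibrium for $R_0>1$ in Theorem~\ref{theo:stability-END}(i).

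Where you differ is the invariant-set step at $R_0=1$.

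\textbf{The paper's route.} It argues locally at a single point of $\{\dot V=0\}$.
\begin{enumerate}
\item It splits $\mc V$ into $\mc J=\{j:y_j>0\}$ and $\mc I=\{i:y_i=0\}$, and uses irreducibility to find an edge $W_{ij}>0$ with $i\in\mc I$, $j\in\mc J$.
\item From $\mathrm{supp}(Wy)\cap\mathrm{supp}(y+z)=\emptyset$ it deduces $y_i=z_i=0$.
\item It then observes $\dot y_i\ge W_{ij}y_j>0$. Hence for small $t>0$ the $i$-th summand of $\bar v^T\ga^\inv[y+z]Wy$ is strictly positive, and $\dot V<0$, contradicting invariance.
\end{enumerate}

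\textbf{Your route.} You move forward along the trajectory via Proposition~\ref{theo:system_pos}(ii) to times where $y>\0$ in every node. There $Wy>\0$ and $y+z>\0$, so the quadratic term is strictly positive immediately.

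\textbf{Trade-off.} Your argument is shorter and handles all configurations uniformly, including the case $\mc I=\emptyset$, which the paper's edge-selection step tacitly excludes. However, it imports the full positivity result of Proposition~\ref{theo:system_pos}(ii), whose proof in Appendix~\ref{app:lemma_positiveness} is the most technical part of the paper. The paper's argument needs only a first-order, single-edge version of that positivity, so it is self-contained and independent of Proposition~\ref{theo:system_pos}(ii).
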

\begin{proof}
	See Appendix \ref{app:lyap_fun}. 
\end{proof}\medskip

The next theorem establishes the stability properties of the network SIRS epidemic model below the epidemic threshold. 
\begin{theorem}\label{theo:stability-DF}
	Consider the network SIRS epidemic model  \eqref{sirs_network_sistema} with irreducible interaction matrix $W$ in $\mathbb R_+^{\mathcal V\times \mathcal V}$ and positive recovery and loss-of-immunity rate vectors $\gamma$ and $\delta$ in $ \mathbb R_{++}^{\mathcal{V}}$. Then, if $R_0\le1$, the disease-free equilibrium $(\1,\0,\0)$ is globally asymptotically stable.  
\end{theorem}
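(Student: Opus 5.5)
The plan is to apply LaSalle's invariance principle to the reduced system \eqref{sistema_ridotto_vettoriale}, using the function $V=\bar v^T\ga^\inv y$ from \eqref{lyap_fun}. The state space $\X$ is compact and forward invariant by Lemma~\ref{lemma:well_pos}, so the reduced state $(y,z)$ stays in the compact set $\{y,z\ge\0,\ y+z\le\1\}$. By Lemma~\ref{lemma:lyap_fun}, when $R_0\le1$ we have $\dot V\le0$ along trajectories. LaSalle then gives, for every initial condition, that the $\omega$-limit set is contained in the largest invariant subset of $\{\dot V=0\}$. Lemma~\ref{lemma:lyap_fun} identifies this subset with $\{y=\0\}$.

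Next I determine the attractor inside $\{y=\0\}$. On the invariant set $\{y=\0\}$ the dynamics reduce to $\dot z=-\de z$, so $z(t)=e^{-\de t}z(0)\to\0$. The only compact invariant set contained in $\{y=\0\}$ is therefore $\{(\0,\0)\}$: a full trajectory $z(t)$ defined for all $t\in\R$ and bounded backward in time must vanish. Hence every $\omega$-limit set equals $\{(\0,\0)\}$, and every trajectory converges to $(y,z)=(\0,\0)$, that is, $x\to\1$. This gives global attractivity of $(\1,\0,\0)$.

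Stability in the sense of Lyapunov still needs an argument, because $V$ controls only $y$ and not $z$. I would use the fact that $\bar v>\0$, so $V(t)\ge c\|y(t)\|_1$ for some $c>0$. Since $V$ is nonincreasing, $\|y(t)\|\le V(0)/c\le C\|y(0)\|$ for all $t\ge0$. The equation $\dot z=\ga y-\de z$ is a stable linear system driven by $y$, so
$$\|z(t)\|\le\|z(0)\|+\max_i(\gamma_i/\delta_i)\sup_{s\ge0}\|y(s)\|\le\|z(0)\|+C'\|y(0)\|.$$
Thus small initial $(y,z)$ stay small, which gives Lyapunov stability. Together with attractivity, this yields global asymptotic stability.

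The main obstacle is the boundary case $R_0=1$, where $\dot V=-\bar v^T\ga^\inv[y+z]Wy$ vanishes on more than just $\{y=\0\}$; for instance, it may vanish when $y+z$ is small. However, the invariant-set characterization is supplied by Lemma~\ref{lemma:lyap_fun}, which we take as given. So the remaining work is only the LaSalle step and the stability estimate for $z$ described above.
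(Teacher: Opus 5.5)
Your proposal is correct and follows the paper's route: LaSalle's invariance principle applied to $V=\bar v^T\ga^\inv y$ through Lemma~\ref{lemma:lyap_fun}, with $z\to\0$ then following from the linear dynamics $\dot z=\ga y-\de z$. The paper gets $z\to\0$ from the variation-of-constants formula once $y\to\0$; you instead shrink the invariant set inside $\{y=\0\}$ to the origin, which is an equally valid alternative. Your explicit Lyapunov-stability estimate is a worthwhile addition: it combines $\|y(t)\|\le C\|y(0)\|$ (from monotonicity of $V$) with the input-to-state bound on $z$. The paper only notes that $V$ is positive definite with respect to $y=\0$, which does not by itself control $z$.
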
 
\begin{proof}
Let $V$ be defined as in \eqref{lyap_fun}. Since both $\bar{v}$ and $\gamma$ are positive, $V$ is positive definite with respect to $y=\mathbf 0$. 
Then, \Cref{lemma:lyap_fun} and LaSalle’s invariance principle imply that 
	$y(t)\to\0$ as $t\to\infty$. 
	Hence, for every $i$ in $\mc V$, we have
	\[z_i(t)=z_i(0)e^{-\delta_i t}+\gamma_i \int_0^t e^{-\delta_i(t-s)}y_i(s)ds\stackrel{t\to\infty}{\longrightarrow} 0\,, \]
	 proving that $(\1,\0,\0)$ is globally asymptotically stable. \end{proof}
	
\subsection{Stability of the endemic equilibrium point}
We now investigate the asymptotic stability of the endemic equilibrium when $R_0>1$, starting with the following preliminary result. 
\begin{lemma}\label{lemma:S_invertible}
	Consider the network SIRS epidemic model \eqref{sirs_network_sistema} with irreducible interaction matrix $W$ and positive recovery and loss-of-immunity rate profiles $\gamma$ and $\delta$. Assume that $R_0>1$ and let $(x^*,y^*,z^*)$ in $\mathcal X$ be the endemic equilibrium of the system. Define \be\label{eta-def}\eta=\min_{i\in\mc V} \min\{(Wy^*)_i,\delta_i\}\,,\ee and, for $\lambda $ in $ \mathbb{C}$, consider the matrix
$$S(\lambda) = [x^*] W - [W y^*] - \ga - \lambda I - \ga(\de+\lambda I)^\inv [W y^*].$$
Then, $S(\lambda)$ is invertible for every $\lambda$ such that $\operatorname{Re}(\lambda) >-\eta$.
\end{lemma}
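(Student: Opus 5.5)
The plan is a weighted Gershgorin (diagonal-dominance) argument, using the endemic equilibrium $y^*>\0$ itself as the scaling vector. Write $S(\lambda)=[x^*]W-D(\lambda)$, where $D(\lambda)$ is the diagonal matrix with entries
$$d_i(\lambda)=a_i+\gamma_i+\lambda+\frac{\gamma_i a_i}{\delta_i+\lambda}\,,\qquad a_i=(Wy^*)_i\,.$$
For $\operatorname{Re}(\lambda)>-\eta$ we have $\operatorname{Re}(\delta_i+\lambda)>0$, so $d_i(\lambda)$ is well defined. Invertibility of $S(\lambda)$ is equivalent to invertibility of $\tilde S(\lambda)=[y^*]^\inv S(\lambda)[y^*]$, and I will show that $\tilde S(\lambda)$ is strictly row diagonally dominant. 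Gershgorin's theorem then excludes $0$ from its spectrum.

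The first step is the off-diagonal bound. The $(i,j)$ entry of $\tilde S$ for $j\neq i$ is $x^*_iW_{ij}y^*_j/y^*_i\ge 0$, and the diagonal entry is $x^*_iW_{ii}-d_i(\lambda)$. By Lemma \ref{equivalence_equilibrium_fixed}, the equilibrium condition gives $[x^*]Wy^*=\ga y^*$, that is, $x^*_i a_i=\gamma_i y^*_i$. Using this together with the triangle inequality, strict dominance in row $i$,
$$|x_i^*W_{ii}-d_i(\lambda)|>\sum\nolimits_{j\ne i}x_i^*W_{ij}y^*_j/y^*_i\,,$$
follows from
$$|d_i(\lambda)|>\sum\nolimits_{j}x_i^*W_{ij}y^*_j/y^*_i=\frac{x_i^*a_i}{y_i^*}=\gamma_i\,.$$
So it suffices to prove $|d_i(\lambda)|>\gamma_i$ for every $i$.

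The second step, which is the only real computation, is the scalar estimate. Set $u=a_i+\lambda$ and $w=\delta_i+\lambda$. Since $\operatorname{Re}\lambda>-\eta$ and $\eta\le\min\{a_i,\delta_i\}$ by \eqref{eta-def}, both $\operatorname{Re}u>0$ and $\operatorname{Re}w>0$. Then $d_i(\lambda)=\gamma_i+u+\gamma_i a_i/w$, and
$$\operatorname{Re}\big(\gamma_i a_i/w\big)=\gamma_i a_i\operatorname{Re}(w)/|w|^2>0\,,$$
because $a_i>0$ (recall $y^*>\0$ and $W$ is irreducible). Hence $|d_i(\lambda)|\ge\operatorname{Re}d_i(\lambda)>\gamma_i$. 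Combining this with the first step gives strict diagonal dominance of $\tilde S(\lambda)$, so $\tilde S(\lambda)$, and therefore $S(\lambda)$, is nonsingular.

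I expect the main obstacle to be choosing the right scaling. Unweighted Gershgorin fails because the row sums of $[x^*]W$ need not be controlled by $\gamma_i$; the equilibrium identity $[x^*]Wy^*=\ga y^*$ controls them only after conjugation by $[y^*]$. Once that choice is made, the rest reduces to the elementary real-part estimate above, and this is exactly where the definition of $\eta$ is needed.
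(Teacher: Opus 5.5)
Your proposal is correct and follows essentially the same route as the paper. The paper works with $H(\lambda)=S(\lambda)[y^*]$, which differs from your $[y^*]^{-1}S(\lambda)[y^*]$ only by a row scaling. It uses the equilibrium identity $[x^*]Wy^*=\ga y^*$ to write the diagonal entries as $-R_k-m_k-g_k(\lambda)$, and applies Gershgorin after showing $\operatorname{Re}(g_k(\lambda))>-m_k$ whenever $\operatorname{Re}(\lambda)>-\eta$. The only difference is cosmetic: the paper bounds real parts to place every Gershgorin disk in the open left half-plane, whereas you obtain strict diagonal dominance directly from the modulus bound $|d_i(\lambda)|>\gamma_i$ and the triangle inequality.
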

\begin{proof}
	See Appendix \ref{app:lemma_schur}.
\end{proof}\medskip

The following theorem establishes the stability behavior of an SIRS network model above the threshold $R_0=1$.
\begin{theorem}\label{theo:stability-END}
	Consider the network SIRS epidemic model  \eqref{sirs_network_sistema} with irreducible interaction matrix $W$ and positive recovery and loss-of-immunity rate profiles $\gamma$ and $\delta$. If $R_0>1$, then:
	\begin{enumerate} 
	\item[(i)] the disease-free equilibrium $(\mathbf 1,\mathbf 0,\mathbf 0)$ is unstable and its region of attraction is $\{y=\0\}$;
		\item[(ii)]  the endemic equilibrium $(x^*,y^*,z^*)$ is locally asymptotically stable.  
		\end{enumerate}
\end{theorem}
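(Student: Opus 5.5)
For part (i) I would linearize the reduced system \eqref{sistema_ridotto_vettoriale} at $(y,z)=(\0,\0)$. The Jacobian there is block lower triangular,
\[
J(\0,\0)=\begin{pmatrix} W-\ga & 0\\ \ga & -\de\end{pmatrix},
\]
so its spectrum is that of $W-\ga$ together with $-\delta_i<0$. The matrix $W-\ga$ is Metzler and irreducible, so its spectral abscissa $s$ is a real eigenvalue with a positive eigenvector. It is a standard fact that $s>0$ if and only if $\rho(\ga^\inv W)>1$. To see this, let $u>\0$ be a Perron eigenvector of $\ga^\inv W$. Then $(W-\ga)u=(R_0-1)\ga u>\0$, and the Collatz--Wielandt bound for the irreducible Metzler matrix gives $s>0$. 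Hence the disease-free equilibrium is unstable when $R_0>1$.

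Next I would identify the region of attraction. The set $\{y=\0\}$ is invariant by Proposition \ref{theo:system_pos}(i), and on it $z_i(t)=z_i(0)e^{-\delta_i t}\to0$, so every trajectory in $\{y=\0\}$ converges to $(\1,\0,\0)$. For the converse, suppose $y(0)\gneq\0$; then $y(t)>\0$ for $t>0$. I would use $V$ from \eqref{lyap_fun}: by \eqref{derivative_V}, $\dot V=(R_0-1)\bar v^Ty-\bar v^T\ga^\inv[y+z]Wy$. The second term is $O(\|y\|(\|y\|+\|z\|))$, so in a small neighbourhood $U$ of $(\0,\0)$ we have $\dot V\ge c\,\bar v^T y\ge c' V$ for some $c,c'>0$. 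A trajectory with $V(0)>0$ that converged to $(\0,\0)$ would eventually remain in $U$, where $V$ grows exponentially, which contradicts $V\to0$. Therefore the region of attraction is exactly $\{y=\0\}$.

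For part (ii) I would compute the Jacobian of \eqref{sistema_ridotto_vettoriale} at $(y^*,z^*)$:
\[
J^*=\begin{pmatrix} [x^*]W-[Wy^*]-\ga & -[Wy^*]\\ \ga & -\de\end{pmatrix}.
\]
The goal is to show that every eigenvalue $\lambda$ satisfies $\operatorname{Re}\lambda\le-\eta<0$, and local asymptotic stability then follows by linearization. Suppose $\operatorname{Re}\lambda>-\eta$. Then $\operatorname{Re}\lambda>-\delta_i$ for every $i$, so $\de+\lambda I$ is invertible and we may take the Schur complement of the lower-right block of $J^*-\lambda I$. This gives
\[
\det(J^*-\lambda I)=\det(-\de-\lambda I)\,\det\!\big([x^*]W-[Wy^*]-\ga-\lambda I-[Wy^*](\de+\lambda I)^\inv\ga\big).
\]
Because diagonal matrices commute, the second factor is exactly $\det S(\lambda)$. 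By Lemma \ref{lemma:S_invertible}, $S(\lambda)$ is invertible whenever $\operatorname{Re}\lambda>-\eta$, so no such $\lambda$ is an eigenvalue. Since $\eta>0$ (because $y^*>\0$, $W$ is irreducible and $\delta>\0$), all eigenvalues of $J^*$ have strictly negative real part.

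The substantive difficulty is hidden in Lemma \ref{lemma:S_invertible}, which I may assume. If it had to be reproved, I would proceed as follows. Multiply $S(\lambda)$ on the right by $[y^*]$ and use the equilibrium identity $[x^*]Wy^*=\ga y^*$ to obtain row-diagonal dominance of the off-diagonal part. Then apply Gershgorin's theorem, which reduces to checking that $|\lambda+(Wy^*)_i+\gamma_i+\gamma_i(Wy^*)_i/(\delta_i+\lambda)|>\gamma_i$ whenever $\operatorname{Re}\lambda>-\eta$. Given that lemma, the remaining step in this proof, the Schur reduction, is routine.
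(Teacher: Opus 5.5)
Your proposal is correct and follows the paper's proof essentially step by step. For part (i) you linearize at the disease-free equilibrium and then use $V=\bar v^T\ga^{-1}y$ to rule out convergence to $(\1,\0,\0)$ from any state with $y(0)\gneq\0$; for part (ii) you take the Schur complement of the lower-right block, which reduces everything to the invertibility of $S(\lambda)$ from Lemma~\ref{lemma:S_invertible}. Your explicit Collatz--Wielandt argument for the instability of $W-\ga$ and your bound $\dot V\ge c'V$ near the origin are slightly more careful versions of what the paper asserts via $\dot V\ge(R_0-1-\eps)\bar v^Ty$, but the underlying argument is the same.
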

\begin{proof}
(i) The Jacobian matrix of the reduced system \eqref{sistema_ridotto_vettoriale} in the disease-free equilibrium is	
$$\nabla f(\0,\0)= \begin{pmatrix}
			W-\ga & 0\\
			\ga & -\de 
		\end{pmatrix}\,.$$
If $R_0=\rho(\ga^{-1}W)>1$, then $W-\ga$ is unstable, hence, so is the disease-free equilibrium. Observe that, for every initial state $(x(0),y(0),z(0))$ with $y(0)=\0$, we have $\dot{y}=\0$ and $\dot{z}=-\de z<\0$. Hence, the region of attraction of the disease-free equilibrium contains the set $\{y=\0\}$. On the other hand, consider an initial state $(x(0),y(0),z(0))$ such that $y(0)\gneq\0$. By Proposition \ref{theo:system_pos}(ii),  $y(t)>\0$ for $t>0$. Arguing by contradiction, assume that $(x(t),y(t),z(t))\stackrel{t\to+\infty}{\longrightarrow}(\1,\0,\0)$. Then, for every $0<\eps<R_0-1$, there exists $T>0$ such that $W_{ij}(y_i(t)+z_i(t))\le \epsilon\gamma_i$, for every $i$ in $\mc V$ and  $t \ge T$, so that, by Lemma \ref{lemma:lyap_fun},  the time derivative of the function $V$ defined in \eqref{lyap_fun} satisfies  
$$\dot{V}=(R_0-1)\bar{v}^T y-\bar{v}^T\ga^\inv[y+z]Wy\ge(R_0-1-\eps)\bar{v}^T y>0\,,$$
so that $y$ cannot converge to $\0$ as $t\to\infty$, thus leading to a contradiction. 
Hence, if $R_0>1$, the region of attraction of $(\1,\0,\0)$  is exactly the set $\{y=\0\}$. 

(ii) The Jacobian matrix of \eqref{sistema_ridotto_vettoriale} in the endemic equilibrium is  
$$\nabla f(y^*,z^*)= \begin{pmatrix}
			[x^*]W-[Wy^*]-\ga & -[Wy^*]\\
			\ga & -\de 
		\end{pmatrix}.$$
		Let $\eta$ be defined as in \eqref{eta-def}. 
	For every $\lambda$ in $\C$ such that $\operatorname{Re}(\lambda)>-\eta $, the diagonal matrix $-(\de+\lambda I)$ is invertible, so that, using the Schur complement
	\be\label{schur_complement}
		S(\lambda)=[x^*]W-[Wy^*]-\ga-\lambda I-\ga (\de +\lambda I)^\inv[Wy^*]
	\ee
 we have that 
$$\ba{rcl}\operatorname{det}(\nabla f(y^*,z^*)-\lambda I )
		&=&\operatorname{det}(-(\de+\lambda I)) \operatorname{det}(S(\lambda)) \\[5pt]
		&=&(-1)^n\prod_i(\delta_i+\lambda)\operatorname{det}(S(\lambda))\,.\ea$$
	Since $\operatorname{det}(S(\lambda))\ne0$ by Lemma \ref{lemma:S_invertible}, we have that 
	$\operatorname{det}(\nabla f(y^*,z^*)-\lambda I )\ne0$ for every $\lambda$ such that $\operatorname{Re}(\lambda)>-\eta$, i.e., 
$$\operatorname{Re}(\lambda)\le-\eta<0\,,\qquad \forall\lambda\in\sigma(\nabla f(y^*,z^*))\,.$$
This proves that $\nabla f(y^*,z^*)$ is Hurwitz stable, so that, by the linearization theorem, the endemic equilibrium $(x^*,y^*,z^*)$ of the network SIRS epidemic model is locally asymptotically stable. 
\end{proof}
Theorem \ref{theo:stability-END} does not characterize the region of attraction of the endemic equilibrium, as it only shows that it is a nonempty open subset of $\mathcal X\setminus \{y=\0\}$. However, numerical simulations, as those reported in \Cref{sec:simulations} lead one to conjecture that  the region of attraction of the endemic equilibrium is indeed the whole $\mathcal X\setminus \{y=\0\}$. In fact, this is known to hold true in the single population case and more in general in the special case of rank-one interaction matrix and homogeneous recovery and loss-of-immunity rate profiles, as shown in Appendix \ref{sec:rank-one}.

	\section{Numerical simulations}\label{sec:simulations}
	In this section, we consider a connected network of $n=5$ populations. The loss-of-immunity rate is fixed as
	$
	\delta=(
		0.3, 0.4, 0.2,0.1,0.6)$
	and 
	\[
	\ga^\inv W=\begin{bmatrix}
		3.0000 & 6.0000 & 4.0000 & 1.0000 & 8.0000 \\
		0.1000 & 0.4000 & 1.0000 & 0      & 0.5000 \\
		2.0000 & 1.4000 & 2.8000 & 2.0000 & 1.4000 \\
		0.6000 & 0      & 0      & 1.2000 & 0.4000 \\
		2.8571 & 0      & 0      & 0.7143 & 1.2857
	\end{bmatrix}\]
	with $R_0=8.7434>1$. In \Cref{fig:infected_rec_2}, we perform $20$ simulations and highlight the behavior of a single population in the network. We observe that, for every initial condition in $\mathcal X\setminus\{y=\mathbf{0}\}$, the trajectories converge to the endemic equilibrium, while trajectories that start in $\{y=\0\}$ are attracted to the disaease-free equilibrium, in accordance with Theorem \ref{theo:stability-END}(i). Our conjecture is that the region of attraction of the endemic equilibrium coincides with $\X\setminus\{y=\0\}$ and these numerical results provide strong evidence in support of this claim. 
	\begin{figure}
		\centering 
		\includegraphics[width=4cm]{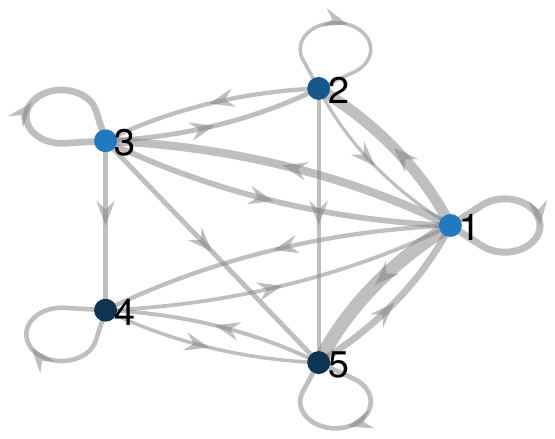}
		\caption{Interaction network  in Section \ref{sec:simulations}. Each link's thickness is proportional to the corresponding entry  $\ga^{-1}W$. }
	\end{figure}
\begin{figure}
	\centering
	\includegraphics[trim=3.5cm 9cm 3.5cm 9cm, clip,width=0.4\textwidth]{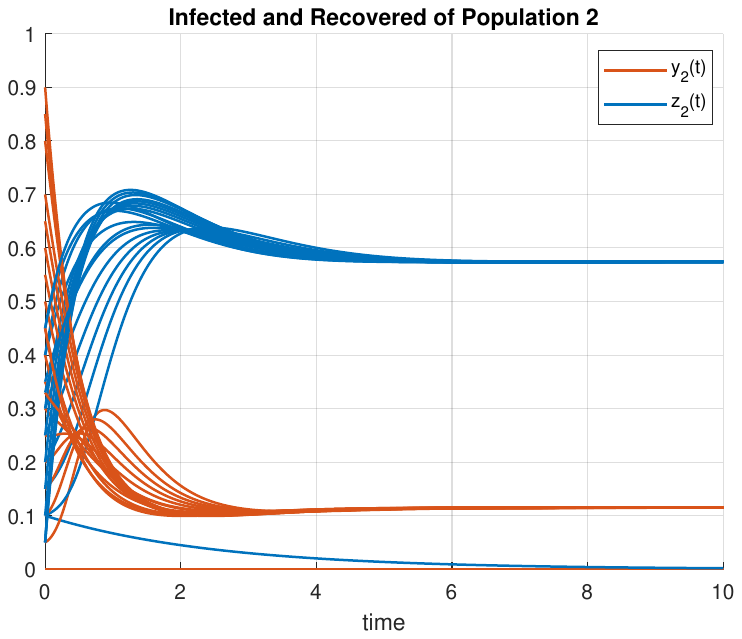}
	\caption{Time evolution of the infected (red) and recovered (blue) individuals for 20 different initial conditions.}
	\label{fig:infected_rec_2}
\end{figure}
	
	\section{Conclusion}\label{sec:conclusion} 
We have studied the dynamic behavior of the network SIRS epidemic model on connected but otherwise general interaction graphs. We have proved that the dominant eigenvalue of a rescaled version of the interaction matrix, whose rows are normalized by the corresponding recovery rates, is the effective reproduction number $R_0$ and determines the qualitative behavior of the system. If $R_0\le1$, then the disease-free equilibrium is globally asymptotically stable; If $R_0 \le 1$, then the disease-free equilibrium is globally asymptotically stable; conversely, if $R_0 > 1$, then the disease-free equilibrium is unstable and an asymptotically stable endemic equilibrium emerges. We have also identified key monotonicity properties of the dependence of the endemic equilibrium on the model parameters  and obtained a provably convergent distributed iterative algorithm for its computation. 

Possible future research directions include proving almost global asymptotic stability of the endemic equilibrium and extending the analysis to non-connected networks. 
	\appendices

	\section{Proof of Lemma \ref{lemma:well_pos}}\label{app:well_pos}

	\begin{proof}
Global existence and uniqueness of solutions follow directly from the Cauchy–Lipschitz theorem, since the vector field in \eqref{sirs_network_sistema} is locally Lipschitz. 
		From the fact that
		$\dot{x_i}(t)+\dot{y_i}(t)+\dot{z_i}(t)=0$ for all $t$  in $\R$ and  $i$ in $\mathcal V$, 
		we have that the set
		\be \label{somma_invariante}
			\{(x,y,z)\in \mathbb{R}^{\mathcal{V}\times \mathcal{V}\times \mathcal{V}}:  x_i+y_i+z_i=1, \forall i\in \mathcal V\} 
		\ee
		is invariant. Furthermore, if $y(0)=\0$, we have from \eqref{sirs_network_sistema} that 
		\[
		\dot{y_i}= \sum\nolimits_{j}W_{ij}y_jx_i-\gamma_i y_i=0 \qquad \forall t\in \mathbb{R}
		\]
		so the set $\{y=\0\}$ is invariant. Hence, so is the set $\{y\geq\0\}$. From \eqref{sirs_network_sistema}, we have that, if $z_i(0)=0$, then $\dot{z}_i=\gamma y_i\geq 0$ whenever $y_i\geq 0$. Therefore, trajectories on $z_i = 0$ cannot cross into the region $z_i < 0$. So, if $z_i(0) \geq 0$ and $y_i(0) \geq 0$, hence $z_i(t) \geq 0$ for all $t\geq0$. We can thus conclude that the set $\{\, y_i \geq 0,\, z_i \geq 0,\, \forall i \in \mathcal V\}$ is positively invariant, since it is the intersection of two positively invariant sets.
		To conclude the proof, let us observe that if $x_i(0)=0$, then $\dot{x}_i=\delta z_i\geq 0$ whenever $z_i\geq 0$. Hence, the set
		$\mathbb{R}_+^{\mathcal{V}\times \mathcal{V}\times \mathcal{V}}$
		is positively invariant and using \eqref{somma_invariante}, the intersection
		\[
		\mathcal{X}=\{  x_i+y_i+z_i=1, \forall i\}\cap\{  x_i\geq0,y_i\geq0, z_i\geq0, \forall i\}
		\]
		is positively invariant and this concludes the proof.
	\end{proof}
	\section{Proof of Proposition \ref{theo:system_pos}}\label{app:lemma_positiveness}
    Proposition \ref{theo:system_pos}(i) follows from the observation that, if $y=\0$, then $\dot y=\0$. 
	The proof of Proposition \ref{theo:system_pos}(ii) is more involved and is based on the following technical result. 
\begin{lemma}\label{lemma:positivness}
	Consider the network SIRS epidemic model  \eqref{sirs_network_sistema} with irreducible interaction matrix $W$ in $\mathbb R_+^{\mathcal V\times \mathcal V}$ and positive recovery and loss-of-immunity rate vector $\gamma$ and $\delta$ in $ \mathbb R_{++}^{\mathcal{V}}$. Consider an initial condition $(x(0),y(0),z(0))$ in $\mathcal X$ with $y(0)\gneq\mathbf 0$ and let $j$ in $\mathcal V$ be such that $y_j(0)>0$ and $i$ in $\mathcal V$ such that $y_i(0)=0$. Let $\Lambda_{ij}$ be a path from node $i$ to node $j$ of minimum length $\alpha_{ij}.$ Define
	\[
	m_{ij} := \bigl|\{ k \in \Lambda_{ij}:k\ne j,\text{ and } z_k(0)=1 \}\bigr|.
	\]
	Then, 
	$y_i^{(\beta)}(0) \ge 0$, for all $\beta = 1,\ldots,\alpha_{ij}+m_{ij}-1$, 
	and
	\[
	y_i^{(\alpha_{ij}+m_{ij})}(0) > 0 .
	\]
\end{lemma}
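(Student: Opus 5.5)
The plan is to argue by induction along the minimum-length path $\Lambda_{ij}$, moving backwards from $j$ towards $i$. Write the path as $i=k_0,k_1,\ldots,k_{\alpha_{ij}}=j$ with $W_{k_{s}k_{s+1}}>0$. The induction hypothesis at node $k_s$ will be: all derivatives $y_{k_s}^{(\beta)}(0)$ of order below some threshold $N_s$ are nonnegative, and $y_{k_s}^{(N_s)}(0)>0$. Here $N_{\alpha_{ij}}=0$ and
$$N_s=N_{s+1}+1+\mathds{1}\{z_{k_s}(0)=1\}\,.$$
Unrolling this recursion gives $N_0=\alpha_{ij}+m_{ij}$, which is the claim. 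The base case $s=\alpha_{ij}$ is just $y_j(0)>0$. For an intermediate node with $y_{k_s}(0)>0$ the statement is immediate, since the threshold can only be reached earlier. So the real work is at nodes with $y_{k_s}(0)=0$.

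At such a node the main tool is repeated differentiation of
$$\dot y_k=x_k\sum_l W_{kl}y_l-\gamma_k y_k\,,$$
combined with the Leibniz rule:
$$y_k^{(m)}=\sum_{r=0}^{m-1}\binom{m-1}{r}x_k^{(r)}\sum_l W_{kl}\,y_l^{(m-1-r)}-\gamma_k y_k^{(m-1)}\,.$$
There are two cases.

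\emph{Case $x_k(0)>0$.} The term $r=0$, $l=k_{s+1}$ gives $x_k(0)W_{kk_{s+1}}y_{k_{s+1}}^{(N_{s+1})}(0)>0$ at order $m=N_{s+1}+1$.

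\emph{Case $x_k(0)=0$.} Since $y_k(0)=0$, this means $z_k(0)=1$. Then $\dot x_k(0)=\delta_k z_k(0)=\delta_k>0$, so the leading positive contribution appears one order later, through the term $r=1$. This accounts for the extra $+1$ counted by $m_{ij}$.

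In both cases the term $-\gamma_k y_k^{(m-1)}(0)$ must be controlled. By the inductive nonnegativity and vanishing of the lower derivatives of $y_k$, I expect it to be zero at the relevant order.

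The main obstacle is sign control of all the remaining mixed Leibniz terms. These involve higher derivatives of $x_k$, which can be negative, and derivatives of $y_l$ for neighbours $l$ off the path. To handle this, I would first strengthen the inductive statement. At a node with $y_k(0)=0$, the derivatives of order below the threshold should in fact be as small as possible in the sense that $y_k^{(\beta)}(0)=0$ up to the first nonzero one. In addition, that first nonzero derivative is automatically positive, because $y_k(t)\ge0$ for $t\ge0$ by Lemma \ref{lemma:well_pos}: a function vanishing at $0$ and nonnegative on $[0,\infty)$ has a positive first nonzero derivative. This positivity argument gives the nonnegativity claims for free. It also reduces the strict inequality at order $\alpha_{ij}+m_{ij}$ to exhibiting one strictly positive contribution while showing that every other contribution is $\ge0$.

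If some off-path neighbour activates $y_k$ earlier, then the first nonzero derivative of $y_k$ occurs at lower order and is positive by the argument just given. I would then need to check that the claimed order-$(\alpha_{ij}+m_{ij})$ derivative is still positive. This is the delicate point, and I would first try to use the minimality of $\Lambda_{ij}$ to rule out such interference, or else to absorb it into the positivity argument.
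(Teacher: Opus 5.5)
Your route is the paper's: induction backwards along $\Lambda_{ij}$ using the Leibniz expansion, with a node having $z_k(0)=1$ delaying activation by one order through $\dot x_k(0)=\delta_k$. The gap is exactly at the point you call delicate. It is also in the earlier sentence saying that an intermediate node with $y_{k_s}(0)>0$ is ``immediate, since the threshold can only be reached earlier''.

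Earlier activation does not help. Once some derivative of $y_i$ of order below $\alpha_{ij}+m_{ij}$ is nonzero, the derivative of order $\alpha_{ij}+m_{ij}$ contains products $x_i^{(q)}(0)\,(Wy)_i^{(s)}(0)$ with $x_i^{(q)}(0)$ possibly negative. It also contains $-\gamma_i y_i^{(\cdot)}(0)$ applied to a nonzero lower derivative. Your argument via $y_k\ge0$ fixes the sign of the \emph{first} nonzero derivative only, not of later ones, so it does not give the nonnegativity claims ``for free'' either.

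Minimality of $\Lambda_{ij}$ cannot exclude this, because it refers only to the fixed target $j$. The path may pass through infected nodes, other infected nodes may be closer to $i$, and a longer path with fewer nodes having $z(0)=1$ may activate $y_i$ sooner. In fact the statement is false in this situation. Take three nodes with $W_{ik}=W_{kj}=W_{ji}=1$ and all other entries zero, and initial state $(x_i,y_i,z_i)=(1,0,0)$, $(x_k,y_k,z_k)=(x_j,y_j,z_j)=(0,1,0)$. Then $\alpha_{ij}=2$ and $m_{ij}=0$, while $\dot y_i(0)=1$, $\dot x_i(0)=-1$ and $\dot y_k(0)=-\gamma_k$, so
\[
\ddot y_i(0)=\dot x_i(0)\,y_k(0)+x_i(0)\,\dot y_k(0)-\gamma_i\,\dot y_i(0)=-1-\gamma_k-\gamma_i<0 .
\]
So no argument can close this step. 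Following the paper will not rescue you: its inductive step has the same hole. It obtains $y_i^{(\beta)}(0)\ge W_{ik}y_k^{(\beta-1)}(0)(1-z_i(0))$ by discarding the other Leibniz terms. It also applies the hypothesis to $(k,j)$ even when $y_k(0)>0$. In the example that inequality reads $-1-\gamma_k-\gamma_i\ge-\gamma_k$.

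What is true is the statement with $\alpha_{ij}+m_{ij}$ replaced by the exact activation order, and your idea of strengthening to \emph{vanishing} lower derivatives proves it. Set $r_l=0$ if $y_l(0)>0$, and otherwise $r_l=1+\theta_l+\min_{h:W_{lh}>0}r_h$, where $\theta_l=1$ if $z_l(0)=1$ and $\theta_l=0$ otherwise. This is finite by irreducibility. Equivalently, $r_l$ is the minimum, over all paths from $l$ to all infected nodes, of the length plus the number of non-terminal nodes with $z(0)=1$; hence $r_i\le\alpha_{ij}+m_{ij}$.

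Then prove, by induction on $\beta$ and simultaneously for all nodes $l$ with $y_l(0)=0$, that $y_l^{(\beta)}(0)=0$ for $\beta<r_l$ and $y_l^{(r_l)}(0)>0$. Let $p=\min_{h:W_{lh}>0}r_h$.
\begin{itemize}
\item Every term $x_l^{(q)}(0)\,(Wy)_l^{(\beta-1-q)}(0)$ with $\beta-1-q<p$ vanishes by the induction hypothesis, and so does $\gamma_l y_l^{(\beta-1)}(0)$ whenever $\beta\le r_l$.
\item Hence every term vanishes for $\beta<r_l$, using $x_l(0)=0$ when $z_l(0)=1$.
\item At $\beta=r_l$ exactly one term survives: $x_l(0)\,(Wy)_l^{(p)}(0)$ if $z_l(0)<1$, or $(p+1)\,\delta_l\,(Wy)_l^{(p)}(0)$ if $z_l(0)=1$.
\item Both are positive, since $(Wy)_l^{(p)}(0)=\sum_{h:r_h=p}W_{lh}\,y_h^{(p)}(0)>0$.
\end{itemize}
This needs no sign information on higher derivatives of $x_l$, and it is all that Proposition~\ref{theo:system_pos}(ii) uses.
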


	\begin{proof}
		We first compute the $r$-th time derivative of $y_i$. First, write
		$\dot{y}_i=f_ig_i-\gamma_i y_i$, where 
$f_i=\sum_j W_{ij}y_j $ and $g_i=1-y_i-z_i$. By the Leibniz's rule, 
		\[
		(f_ig_i)^{(r-1)}=\sum_{h=0}^{r-1} \binom{r-1}{h}f_i^{(h)}g_i^{(r-1-h)}
		\]
		Hence, the $r$-th time derivative of $y_i$ is
		\begin{align*}
			y_i^{(r)}=&\dot{y}_i^{(r-1)}
			=(f_ig_i)^{(r-1)}-\gamma_i y_i^{(r-1)}\\
			=&\sum_{h=0}^{r-1} \tbinom{r-1}{h}f_i^{(h)}g_i^{(r-1-h)}-\gamma_i y_i^{(r-1)}
		\end{align*}
		and using the definitions of $f_i$ and $g_i$, we obtain
		\be
			\label{formula_derivazione_k}
			y_i^{(r)} =
			\sum_{h=0}^{r-1} \tbinom{r-1}{h}
			\sum\nolimits_{j} W_{ij} y_j^{(h)} (1-y_i-z_i)^{(r-1-h)}
			- \gamma_i y_i^{(r-1)}
		\ee
		Now, we can proceed by induction on the path length $\alpha_{ij}$.\\
		Take $\alpha_{ij}=1$: in this case $W_{ij}>0$ and
		\[
		m_{ij} =
		\begin{cases}
			0, & \text{if } z_i(0)<1,\\
			1, & \text{if } z_i(0)=1 .
		\end{cases}
		\]
		
		If $z_i(0)<1$, using $y_i(0)=0$ we obtain
		\begin{align*}
			\dot y_i(0)&= \sum_\ell W_{i\ell} y_\ell(0)(1-y_i(0)-z_i(0)) - \gamma_i y_i(0) \\
			&\ge W_{ij} y_j(0)(1-z_i(0)) > 0
		\end{align*}
		which proves the claim.
		
		If $z_i(0)=1$, then $\dot y_i(0)=0$ and,  using \eqref{formula_derivazione_k}:
		\begin{align*}
			\ddot y_i(0)
			&= \sum_\ell W_{i\ell} \dot y_\ell(0)(1-y_i(0)-z_i(0))\\
			&	+ \sum_\ell W_{i\ell} y_\ell(0)(- \dot y_i(0) - \dot z_i(0))
			- \gamma_i \dot y_i(0) \\
			&\ge - W_{ij} y_j(0) \dot z_i(0).
		\end{align*}
		Since $z_i(0)=1$ and $\dot z_i(0)=-\delta_i z_i(0)<0$, it follows that
		$\ddot y_i(0)>0$, completing the base case.
		
		Assume that the statement holds for all pairs $(k,j)$ such that $\alpha_{kj}=\alpha$.
		Let $\Lambda_{ij}$ be a path of length $\alpha+1$ and let $k$ be the successor of $i$ in $\Lambda_{ij}$, so that $W_{ik}>0$ and $\alpha_{kj}=\alpha$.\\
		If $z_i(0)<1$, then $m_{ij}=m_{kj}$. Using \eqref{formula_derivazione_k} and evaluating at $t=0$, we get 
		$
		y_i^{(\beta)}(0)
		\ge W_{ik} y_k^{(\beta-1)}(0)(1-z_i(0))$. 
		By the inductive hypothesis,
		\[
		y_k^{(r)}(0)\ge0 \quad \text{for } r \le \alpha+m_{kj}-1,
		\quad
		y_k^{(\alpha+m_{kj})}(0)>0.
		\]
		Hence
		\[
		y_i^{(\beta)}(0)\ge0 \quad \text{for } \beta \le (\alpha+1)+m_{ij}-1,
		\quad
		y_i^{((\alpha+1)+m_{ij})}(0)>0.
		\]
		If $z_i(0)=1$, then $m_{ij}=m_{kj}+1$. In this case $1-y_i(0)-z_i(0)=0$, and the first nonzero contribution arises from the derivative of $(1-y_i-z_i)$. From \eqref{formula_derivazione_k} we obtain
		\[
		y_i^{(\beta)}(0)
		\ge W_{ik} y_k^{(\beta-2)}(0) (1-y_i-z_i)'(0).
		\]
		Since $(1-y_i-z_i)'(0)=-\dot z_i(0)=\delta_i z_i(0)>0$, using the inductive hypothesis we conclude that
		\[
		y_i^{(\beta)}(0)\ge0 \quad \text{for } \beta \le (\alpha+1)+m_{ij}-1,
		\quad
		y_i^{((\alpha+1)+m_{ij})}(0)>0.
		\]
		This completes the inductive step and the proof.
	\end{proof}
We can now prove Proposition \ref{theo:system_pos}(ii). 	
	Using \Cref{lemma:positivness} we obtain that for every $i$ in $ \mathcal{V}$, there exists $r_i$ in $\mathbb{N}$ such that
	\be
		y_i^{(r_i)}(0)>0, \quad \text{and} \quad y_i^{(\beta)}(0)\ge0 \ \text{for all } \beta<r_i
	\ee
	This means that there exists a right neighborhood $(0,\epsilon_i)$ of the origin with $\epsilon_i>0$ such that $y_i(t)>0,\,\, \forall \, t\in (0,\epsilon_i)$. Let $\epsilon=\min_{i\in\mathcal{V}}\epsilon_i$, so that $y_i(t)>0$ for all $i$ in $\mathcal{V}$ and all $t\in(0,\epsilon)$. Since the set $\{y=0\}$ is invariant for the system, no trajectory starting outside it can reach it in finite time. Hence, $y(t)>0, \,\, \forall\, t>0.$

\section{Proof of Lemma \ref{equivalence_equilibrium_fixed}}\label{app:proof_equivalence}
\begin{proof}
	At equilibrium, the following  must be satisfied:
	\be	x^*=\1-y^*-z^*\qquad
		z^*=[\gamma][\delta]^{-1}y^*\label{equilibrium_eq_z}\,,\ee
		\be\left(I-[([\gamma]+[\delta])[\delta]^\inv y^*]\right)Wy^* =[\gamma] y^*\label{equilibrium_eq_y}\,.
	\ee
	From \eqref{equilibrium_eq_z}, it must be $(\1-y^*-\ga\de^\inv y^*,y^*,[\gamma][\delta]^\inv y^*)$ for some $y^*$ in $\mc Y_\alpha$.
Rewrite \eqref{equilibrium_eq_y} componentwise
	\be\label{equilibrium_eq_y_i}
		(Wy^*)_i/\gamma_i={y_i^*}/(1-(\gamma_i+\delta_i)y^*_i/{\delta_i})\,,
	\ee
	and apply $\Psi_i(\,\cdot\cdot,\de^\inv \ga)$ on both sides, obtaining
	\[
	y_i^*=\Psi_i(\ga^\inv Wy^*,\de^\inv \ga)=\Phi_i(y^*,\ga^\inv W,\de^\inv \gamma)\,.
	\]
	Hence, finding an equilibrium point for \eqref{sirs_network_sistema} is equivalent to find a fixed point for $\Phi(\,\cdot\,,\ga^\inv W,\de^\inv \gamma)$. 
\end{proof}
	
	\section{Proof of \Cref{lem:H}}\label{app:lem_H}
(i) It follows directly from the fact that $\Psi_i(\0,\alpha)=0\,,$ for every $i$ in $\mc V$.

(ii) Since, for every $i$ in $\mathcal V$, $\Psi_i(y,\alpha)<y_i$ for every $y_i>0$, then $\Psi(y,\alpha)< y$ for every $y>\mathbf0$. Then, by definition of $\Phi$,
			\[
			\Phi(y,M,\alpha)=\Psi(My,\alpha)<My
			\]
			for every $y>\mathbf 0$ and irreducible $M$.

(iii) Using the definition of $\Phi$,
			\[
			(\Phi(y,M,\alpha))_i=\dfrac{\sum_j M_{ij}y_j}{1+(1+\alpha_i)\sum_jM_{ij}y_j}
			\]
			we obtain
			\[
			\dfrac{\partial (\Phi(y,M,\alpha))_i}{\partial y_j}=\dfrac{M_{ij}}{(1+(1+\alpha_i)\sum_jM_{ij}y_j)^2}\ge0
			\]
			\[
			\dfrac{\partial (\Phi(y,M,\alpha))_i}{\partial M_{ij}}=\dfrac{y_j}{(1+(1+\alpha_i)\sum_jM_{ij}y_j)^2}\ge0
			\]
			\[
			\dfrac{\partial (\Phi(y,M,\alpha))_i}{\partial \alpha_i}=-\dfrac{(\sum_j M_{ij}y_j)^2}{(1+(1+\alpha_i)\sum_jM_{ij}y_j)^2}\le 0
			\]
			and this concludes the proof.\qed

	\section{Proof of \Cref{lemma:lyap_fun}}\label{app:lyap_fun}	Let $\bar w=\ga^\inv\bar v$. First, compute the time derivative 
		\be\label{lyap_deriv}
			\begin{split}
				\dot{V}=&\bar{w}^T\dot{y}\\
				=&\bar{w}^T(I-[y+z])Wy-\ga y\\
				=&\bar{w}^TWy-\bar{w}^T\ga y-\bar{w}^T[y+z]Wy\\
				=&\bar{v}^T\ga^\inv W y -\bar{v}^T\ga^\inv\ga y-\bar{w}^T[y+z]Wy\\
				=&(R_0-1)\bar{v}^Ty-\bar{w}^T[y+z]Wy \\ 
				=&(R_0-1)\bar{w}^T\ga y-\bar{w}^T[y+z]Wy\,.  \\
			\end{split}
		\ee
		If $R_0\le1$, then $\dot{V}\le-\bar{w}^T[y+z]Wy\le0$ and 
		$\dot{V}=0$ if and only if either $y=\0$ or $R_0=1$ and $\mathrm{supp}(Wy)\cap\mathrm{supp}(y+z)=\emptyset$. 
		In the latter case, for any $(y,z)$ belonging to an invariant subset of $\{\dot{V}=0\}$, consider the sets 
		$\mc J=\{j :y_j >0\}$, $\mc I=\{i:y_i=0\}$.
		Assume by contradiction that $\mc J\ne\emptyset$. Since $W$ is irreducible, there must exist $i$ in $\mc I$ and $j$ in $\mc J$ such that $W_{ij}>0$. Therefore, $i$ in $ \mathrm{supp}(Wy)$ which means that $i\notin \mathrm{supp}(y+z)$ and so $y_{i}=z_{i}=0$. We obtain
		$$\dot{y_{i}}=(1-y_{i}-z_{i})\sum\nolimits_k W_{ik}y_k -\gamma_{i} y_{i}\ge W_{ij}y_j>0\,.$$
		Therefore $\dot y_{i}>0$ at the considered state. By continuity, there exists $\varepsilon>0$ such that for $t\in(0,\varepsilon)$ we have $(Wy)_{i}(t)>0$ and $(y+z)_{i}(t)>0$. For such small $t$ the term $\bar w_{i}(y_{i}(t)+z_{i}(t))(Wy)_{i}(t)$ is strictly positive, hence
$$\dot V=-\sum\nolimits_i \bar w_i (y_i(t)+z_i(t))(Wy)_i(t) <0\,.$$
		This contradicts the assumption that the trajectory stays in the invariant subset of \(\{\dot{V}=0\}\). Thus no $y\ne\0$ can belong to an invariant subset of $\{\dot{V}=0\}$, and, since $\{y=\0\}$ is invariant by Proposition \ref{theo:system_pos}(i), we can conclude the proof. 
	\qed

	\section{Proof of \Cref{lemma:S_invertible}}\label{app:lemma_schur}
For every $k$ in $\mc V$, let $H(\lambda)=S(\lambda) [y^*]$, $m_k=(Wy^*)_ky_k^*$, and $g_k(\lambda)=\lambda y_k^*+\frac{\gamma_k m_k}{\delta_k+\lambda}$. 
		Then, $H_{kk}(\lambda)=x_k^*W_{kk}y_k^*-\gamma_k y_k^* - m_k-g_k(\lambda)\,,$
		and  $H_{kj}(\lambda)=x_k^*W_{kj}y_j^* \geq 0$ for all $j\neq k\,.$ 
		Defining 
$R_k=\sum_{j\neq k}H_{kj}(\lambda)=\sum_{j\neq k} x_k^*W_{kj}y_j^*$, we get
		$$
			0=x_k^*\sum\nolimits_{j} W_{kj}y_j^*-\gamma_k y_k^*
			=R_k+H_{kk}(\lambda)+m_{k}+g_k(\lambda)\,,$$
i.e., 
	$H_{kk}(\lambda)=-R_k-m_k-g_k(\lambda)$. 
		Since $y^*>\0$ and $W$ is irreducible, $m_k=(Wy^*)_ky^*_k>0$. 
		If $\operatorname{Re}(\lambda)>-\eta$, then 
$$\ba{rclcl}\!\!Re(g_k(\lambda))\!\!\!&
\!\!\!=\!\!\!&\!\!\ds \operatorname{Re}(\lambda)y_k^*+\frac{\gamma_k m_k(\delta_k+\operatorname{Re}(\lambda))}{(\delta_k+\operatorname{Re}(\lambda))^2+Im(\lambda)^2}\\
			&\!\!\!>\!\!\!&\!\!\ds -\eta y_k^*+\frac{\gamma_k m_k (\delta_k-\eta)}{(\delta_k+\operatorname{Re}(\lambda))^2+Im(\lambda)^2}
			&\!\!\!\!\ge \!\!\!&\!\! -m_k \,, 
		\ea$$
so that
$Re(H_{kk}(\lambda))=-R_k-m_k-Re(g_k(\lambda))<-R_k \,,$
		and every disk $B_{R_k}(H_{kk}(\lambda))=\{z\in\C:|z-H_{kk}(\lambda)|<R_k\}$ is contained in the right half-plane $\{\mu\in \mathbb{C}: Re(\mu)\leq -m_k\}$. By the Gershgorin Theorem that, if $\operatorname{Re}(\lambda)>-\eta$, then  
		$$\sigma(H(\lambda))\subseteq		\bigcup\nolimits_{k} B_{R_k}(H_{kk}(\lambda))\subseteq\{\mu\in \mathbb{C}:Re(\mu)\leq -m_*\}\,,$$
		where $m_*=\min_k m_k>0$, so that all eigenvalues of $H(\lambda)$ have negative real part, hence $\operatorname{det}(H(\lambda))\neq 0 $. 
		Since $y^*>\0\,$, this implies that $ S(\lambda)=H(\lambda)[y^*]^{-1}$ is non singular. \qed

	\section{Region of attraction of the endemic equilibrium point with rank-one interaction matrix}\label{sec:rank-one}
For rank-one interaction matrix , the Lyapunov argument  of \cite{Li2014} can be adapted to prove the following. 
\begin{proposition}\label{theo:stability_end_rank1}
Consider the network SIRS epidemic model \eqref{sirs_network_sistema} with rank-one interaction matrix $W=ab^T$, $a,b>\0$,  homogeneous recovery rate profile $\gamma=\bar\gamma \mathbf 1$, $\ov\gamma>0$, and loss-of-immunity rate profile $\delta>\0$. If $R_0=a^Tb/\gamma>1$, then the endemic equilibrium $(x^*,y^*,z^*)$ is asymptotically stable and its region of attraction is $\mathcal{X}\setminus\{y=\mathbf 0\}$.
\end{proposition}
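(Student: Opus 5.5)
Since $W=ab^T$ and $\gamma=\bar\gamma\1$, the force of infection is common to all nodes up to the factor $a_i$: $(Wy)_i=a_i\,s$ with $s=b^Ty$. So the system is driven by the scalar aggregate $s$. At the endemic equilibrium, \eqref{equilibrium_eq_y} gives $a_i s^* x_i^*=\bar\gamma y_i^*$ with $s^*=b^Ty^*>0$, together with $z_i^*=\bar\gamma y_i^*/\delta_i$ and $x_i^*=1-y_i^*-z_i^*$. The plan is to adapt the Lyapunov function of \cite{Li2014} (for the single population, \cite{KOROBEINIKOV2002955}). This is a Volterra-type function combining a logarithmic term in the aggregated infection with quadratic terms in the per-node deviations. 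Its derivative should be nonpositive on $\mathcal X\setminus\{y=\0\}$, and LaSalle's principle then yields convergence.

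\textbf{Step 1 (change of variables).} Write $u_i=x_i-x_i^*$ and $w_i=z_i-z_i^*$, so that $y_i-y_i^*=-u_i-w_i$. The dynamics become
\begin{align*}
\dot u_i &= -a_i(s x_i - s^* x_i^*)+\delta_i w_i, \\
\dot w_i &= -\bar\gamma(u_i+w_i)-\delta_i w_i .
\end{align*}

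\textbf{Step 2 (candidate function).} Let $\bar b_i=b_i/\bar\gamma$, which makes $\bar b$ proportional to the left Perron vector of $\ga^{-1}W$. Take
$$
V=\sum_i \bar b_i\Big(y_i-y_i^*-y_i^*\ln\frac{y_i}{y_i^*}\Big)+\sum_i \frac{c_i}{2}(u_i+w_i)^2+\sum_i\frac{d_i}{2}w_i^2 ,
$$
with constants $c_i,d_i>0$ to be tuned. If the log part is awkward, the alternative is to use $s-s^*-s^*\ln(s/s^*)$ on the aggregate.

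\textbf{Step 3 (log term).} Along trajectories,
$$\frac{d}{dt}\Big(y_i-y_i^*\ln y_i\Big)=\Big(1-\frac{y_i^*}{y_i}\Big)\big(a_i s x_i-\bar\gamma y_i\big).$$
Substituting $\bar\gamma=a_i s^* x_i^*/y_i^*$, the sum $\sum_i\bar b_i(\cdot)$ splits into two parts. The first is a part in $s$ alone, $\sum_i \bar b_i a_i x_i^*(s-s^*)\cdot(\cdots)$, which cancels by the identity $\sum_i\bar b_i a_i x_i^*=1$; this identity follows from multiplying $a_i s^*x_i^*=\bar\gamma y_i^*$ by $\bar b_i$ and summing. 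The second part is linear in $(x_i-x_i^*)(s-s^*\,y_i/y_i^*)$-type products. The homogeneity of $\gamma$ is exactly what makes $\bar\gamma y_i^*/(a_i x_i^* s^*)=1$ uniformly, so that these cross terms assemble.

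\textbf{Step 4 (choice of $c_i,d_i$ and LaSalle).} Choose $c_i,d_i$ so that the remaining cross terms $u_i w_i$ and $u_i(s-s^*)$ cancel. This leaves
$$\dot V=-\sum_i\big(\alpha_i u_i^2+\beta_i w_i^2\big)-\kappa\,(\text{nonnegative terms}),$$
as in the scalar SIRS Lyapunov computation of \cite{KOROBEINIKOV2002955}. Then $\dot V\le0$, and the set $\{\dot V=0\}$ forces $x=x^*$ and $z=z^*$. Invariance then gives $y=y^*$. Finally, $V\to\infty$ as any $y_i\to0$, and Proposition~\ref{theo:system_pos}(ii) gives $y(t)>\0$ for $t>0$, so every trajectory starting in $\mathcal X\setminus\{y=\0\}$ stays in a compact sublevel set away from $\{y=\0\}$. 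LaSalle's principle then gives convergence to the endemic equilibrium, and together with Theorem~\ref{theo:stability-END}(ii) this yields asymptotic stability with the claimed region of attraction.

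\textbf{Main obstacle.} The hard part is Step 4: the coupling term $a_i(s-s^*)x_i$ mixes the node variables with the aggregate, and the constants must be chosen so that the indefinite cross terms vanish. I would first try $d_i=c_i\bar\gamma/\delta_i$ (mirroring the scalar case) and weights $c_i$ proportional to $\bar b_i/(a_i s^*)$.
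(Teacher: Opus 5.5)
There is a genuine gap. Your plan (a Volterra-type Lyapunov function exploiting $(Wy)_i=a_is$, then LaSalle) is the right one, but Step~4 is the whole proof and you do not carry it out. Moreover, the candidate you actually write down does not have the cancellation structure you attribute to it.

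With the per-node logarithms $\sum_i\bar b_i(y_i-y_i^*\ln y_i)$, the ``$s$-alone'' part does not cancel. A direct computation gives the derivative
\[
s^*-\frac{s}{s^*}\sum_i b_i\frac{(y_i^*)^2}{y_i}-\sum_i\frac{s\,\bar b_i a_i}{y_i}\,u_i(u_i+w_i).
\]
Its first two terms are merely nonpositive (Cauchy--Schwarz). The last contains the cross term $-\frac{s\bar b_ia_i}{y_i}u_iw_i$, whose coefficient is state-dependent. It is constant only when $n=1$, which is why the scalar computation works, and nothing else in $V$ produces a $1/y_i$ factor to cancel it.

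Your quadratic $\frac{c_i}{2}(u_i+w_i)^2$ is a quadratic in $y_i-y_i^*$, and it fails in two ways:
\begin{itemize}
\item Its derivative contains $-c_ia_ix_i^*(s-s^*)(u_i+w_i)$. The $(s-s^*)w_i$ part cannot be offset by anything, not even by the aggregate logarithm, whose derivative is $(s-s^*)\sum_ib_ia_iu_i$.
\item Its $u_iw_i$ terms have the same negative sign as those from $\frac{d_i}{2}w_i^2$ (and from the log term), so they add rather than cancel.
\end{itemize}
So ``choose $c_i,d_i$ so that the cross terms cancel'' is not achievable in your family, with either logarithmic term. A domination argument with state-dependent coefficients would be a different, unproved claim.

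The missing idea is to use the aggregate logarithm and put the quadratic on $x$ rather than on $y$. With $h=b^Ty$, take
\[
V=\sum_i\frac{b_i}{2x_i^*}(x_i-x_i^*)^2+\sum_i\frac{\delta_ib_i}{2\bar\gamma x_i^*}(z_i-z_i^*)^2+h-h^*-h^*\ln\frac{h}{h^*}.
\]
Homogeneity of $\gamma$ is used exactly in the identity $b^T\ga y=\bar\gamma h$. It gives $\dot h=h(\sum_ib_ia_ix_i-\bar\gamma)$, and together with $\sum_ib_ia_ix_i^*=\bar\gamma$ this yields $\dot h/h=\sum_ib_ia_iu_i$. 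By contrast, the equilibrium relation $\gamma_iy_i^*=a_is^*x_i^*$ that you single out holds for heterogeneous rates too.

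The cancellations then work as follows:
\begin{itemize}
\item The derivative of the log term cancels the $-(h-h^*)\sum_ib_ia_iu_i$ term of the first sum; this is what forces the weights $b_i/x_i^*$.
\item The $u_iw_i$ terms cancel because the $w$-weights equal $\delta_i/\bar\gamma$ times the $u$-weights. Your guess $d_i=c_i\bar\gamma/\delta_i$ has this ratio inverted.
\end{itemize}
This leaves
\[
\dot V=-\sum_i\frac{a_ib_ih}{x_i^*}u_i^2-\sum_i\frac{\delta_ib_i(\bar\gamma+\delta_i)}{\bar\gamma x_i^*}w_i^2\le0,
\]
which vanishes only at $x=x^*$, $z=z^*$, hence $y=y^*$. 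Since $V$ blows up only as $h\to0$, its sublevel sets are compact subsets of $\mathcal X\setminus\{y=\0\}$, and LaSalle's principle concludes.
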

\begin{proof}
Let $h(y)=b^Ty$, $h^*= h(y^*)$. Define, for $y\gneq\mathbf0$:
	$$
		V_1=\frac{1}{2}\sum_i\frac{b_i}{x_i^*}(x_i-x_i^*)^2\,,\qquad	V_2=\frac{1}{2}\sum_i \frac{\delta_ib_i}{\bar \gamma x_i^*}(z_i-z_i^*)^2\,,$$
$$V_3=h(y)-h^*+h^*\ln\left({{h^*}/{h(y)}}\right)\,,\qquad 
				V=V_1+V_2+V_3\,.$$
$V$ is positive definite with respect to $(x^*,y^*,z^*)$. Moreover, 
$$\ba{rcl}\!\!\!\dot V_1\!\!\!
			&\!\!\!=\!\!\!&-\sum_i\frac{b_i}{x_i^*} a_i h (x_i-x_i^*)^2-(h-h^*)\sum_ib_i a_i(x_i-x_i^*)\\&&+\sum_i\frac{b_i}{x_i^*}\delta_i(x_i-x_i^*)(z_i-z_i^*)\,,\ea$$
$$\ba{rcl}\!\!\!\dot V_2\!\!\!
			&=&-\sum_i \delta_i\frac{b_i}{x_i^*}(z_i-z_i^*)(x_i-x_i^*)\\
			&&-\sum_i\frac{\delta_i}{\bar \gamma}\frac{b_i}{x_i^*}(\bar \gamma+ \delta_i)(z_i-z_i^*)^2\,,\ea$$
$$\ba{rcl}\!\!\!\dot V_3\!\!\!&=&(h-h^*)\dot h/{h}=(h-h^*)\sum_i b_ia_i(x_i-x_i^*)\,,\ea$$
$$
		\ba{rcl}\dot V&=&\dot V_1+\dot V_2+\dot V_3\\
		&=&\sum_i\left(-\frac{b_i a_i h}{x_i^*}(x_i-x_i^*)^2-\frac{\delta_i b_i (\bar \gamma+\delta_i)}{\bar \gamma x_i^*}(z_i-z_i^*)^2\right)\,.
	\ea$$
	Hence, $\dot V\le 0$ and $\dot V=0$ if and only if $(x,y,z)\ne(x^*,y^*,z^*)$. By LaSalle's invariance principle, $(x^*,y^*,z^*)$ is globally asymptotically stable on $\mathcal X\setminus\{y=\mathbf0\}$.
\end{proof}



		\bibliographystyle{ieeetr}
		\bibliography{bib}

\begin{thebibliography}{10}

\bibitem{Kermack.McKendrick:1927}
W.~O. Kermack and A.~G. McKendrick, ``A contribution to the mathematical theory
  of epidemics,'' {\em Proceedings of the {R}oyal {S}ociety of {L}ondon. Series
  A}, vol.~115, no.~772, pp.~700--721, 1927.

\bibitem{KermackMcKendrick1932}
W.~O. Kermack and A.~G. McKendrick, ``Contributions to the mathematical theory
  of epidemics. ii. the problem of endemicity,'' {\em Proceedings of the Royal
  Society A}, vol.~138, pp.~55--83, 1932.

\bibitem{anderson-may1991}
R.~M. Anderson and R.~M. May, {\em Infectious Diseases of Humans: Dynamics and
  Control}.
\newblock Oxford: Oxford University Press, 1991.

\bibitem{Diekmann2000MathematicalEO}
O.~Diekmann and J.~A.~P. Heesterbeek, {\em Mathematical Epidemiology of
  Infectious Diseases: Model Building, Analysis and Interpretation}.
\newblock Wiley, 2000.

\bibitem{cooke1973some}
K.~L. Cooke and J.~A. Yorke, ``Some equations modelling growth processes and
  gonorrhea epidemics,'' {\em Mathematical Biosciences}, vol.~16, no.~1-2,
  pp.~75--101, 1973.

\bibitem{hethcote1973asymptotic}
H.~W. Hethcote, ``Asymptotic behavior in a deterministic epidemic model,'' {\em
  Bulletin of Mathematical Biology}, vol.~35, pp.~607--614, 1973.

\bibitem{GarnettAnderson1996}
G.~P. Garnett and R.~M. Anderson, ``Sexually transmitted diseases and sexual
  behavior: Insights from mathematical models,'' {\em The Journal of Infectious
  Diseases}, vol.~174, pp.~S150--S161, 10 1996.

\bibitem{liu1986influence}
W.-m. Liu, S.~A. Levin, and Y.~Iwasa, ``Influence of nonlinear incidence rates
  upon the behavior of {SIRS} epidemiological models,'' {\em Journal of
  Mathematical Biology}, vol.~23, no.~2, pp.~187--204, 1986.

\bibitem{Jin2007}
Y.~Jin, W.~Wang, and S.~Xiao, ``An {SIRS} model with a nonlinear incidence
  rate,'' {\em Chaos, Solitons \& Fractals}, vol.~34, pp.~1482--1497, 12 2007.

\bibitem{KOROBEINIKOV2002955}
A.~Korobeinikov and G.~Wake, ``Lyapunov functions and global stability for
  {SIR}, {SIRS}, and {SIS} epidemiological models,'' {\em Applied Mathematics
  Letters}, vol.~15, no.~8, pp.~955--960, 2002.

\bibitem{lajmanovich1976deterministic}
A.~Lajmanovich and J.~A. Yorke, ``A deterministic model for gonorrhea in a
  nonhomogeneous population,'' {\em Mathematical Biosciences}, vol.~28,
  no.~3-4, pp.~221--236, 1976.

\bibitem{HirschMonotoneSystems06}
M.~Hirsch and H.~Smith, ``Monotone dynamical systems,'' vol.~2 of {\em Handbook
  of Differential Equations: Ordinary Differential Equations, Chapter 4},
  pp.~239 -- 357, North-Holland, 2006.

\bibitem{pastor2015epidemic}
R.~Pastor-Satorras, C.~Castellano, P.~Van~Mieghem, and A.~Vespignani,
  ``Epidemic processes in complex networks,'' {\em Reviews of Modern Physics},
  vol.~87, pp.~925--979, 2015.

\bibitem{pare2020modeling}
P.~E. Par{\'e}, C.~L. Beck, and T.~Ba{\c s}ar, ``Modeling, estimation, and
  analysis of epidemics over networks: An overview,'' {\em Annual Reviews in
  Control}, vol.~50, pp.~345--360, 2020.

\bibitem{Alutto.ea:2025}
M.~Alutto, L.~Cianfanelli, G.~Como, and F.~Fagnani, ``On the dynamic behavior
  of the network {SIR} epidemic model,'' {\em IEEE Transactions on Control of
  Network Systems}, vol.~12, no.~1, pp.~177--189, 2025.

\bibitem{Nowzari.ea:2016}
C.~Nowzari, V.~M. Preciado, and G.~J. Pappas, ``Analysis and control of
  epidemics: A survey of spreading processes on complex networks,'' {\em IEEE
  Control Systems Magazine}, vol.~36, no.~1, pp.~26--46, 2016.

\bibitem{Mei.ea:2017}
W.~Mei, S.~Mohagheghi, S.~Zampieri, and F.~Bullo, ``On the dynamics of
  deterministic epidemic propagation over networks,'' {\em Annual Reviews in
  Control}, vol.~44, pp.~116--128, 2017.

\bibitem{Brauer2017}
F.~Brauer, ``Mathematical epidemiology: Past, present, and future,'' {\em
  Infectious Disease Modelling}, vol.~2, no.~2, pp.~113--127, 2017.

\bibitem{ZinoCao2021}
L.~Zino and M.~Cao, ``Analysis, prediction, and control of epidemics: A survey
  from scalar to dynamic network models,'' {\em IEEE Circuits and Systems
  Magazine}, vol.~21, no.~2, pp.~4--24, 2021.

\bibitem{Li2014}
C.~Li, C.~Tsai, and S.~Yang, ``Analysis of epidemic spreading of an {SIRS}
  model in complex heterogeneous networks,'' {\em Communications in Nonlinear
  Science and Numerical Simulations}, vol.~19, pp.~1042--1054, 2014.

\bibitem{chen2014global}
L.~Chen and J.~Sun, ``Global stability and optimal control of an {SIRS}
  epidemic model on heterogeneous networks,'' {\em Physica A: Statistical
  Mechanics and its Applications}, vol.~410, pp.~196--204, 2014.

\bibitem{zhang2021layered}
Y.~Zhang and D.~Pan, ``Layered {SIRS} model of information spread in complex
  networks,'' {\em Applied Mathematics and Computation}, vol.~411, p.~126524,
  2021.

\bibitem{saif2019epidemic}
M.~A. Saif, ``Epidemic threshold for the {SIRS} model on the networks,'' {\em
  Physica A: Statistical Mechanics and its Applications}, vol.~535, p.~122251,
  2019.

\bibitem{lijun-nonmonotone-2018}
L.~Liu, X.~Wei, and N.~Zhang, ``Global stability of a network-based {SIRS}
  epidemic model with nonmonotone incidence rate,'' {\em Physica A}, vol.~515,
  pp.~587--599, 2019.

\bibitem{berman1994nonnegative}
A.~Berman and R.~J. Plemmons, {\em Nonnegative Matrices in the Mathematical
  Sciences}.
\newblock Classics in Applied Mathematics, Philadelphia: SIAM, 1994.

\end{thebibliography}
		
	\end{document}